\newcommand{\be}{\begin{equation}}
\newcommand{\ee}{\end{equation}}
\newcommand{\bea}{\begin{eqnarray}}
\newcommand{\eea}{\end{eqnarray}}
\newcommand{\tr}{{\rm tr\/}\,}
\newcommand{\Tr}{{\rm Tr}\,}
\renewcommand{\epsilon}{\varepsilon}
\renewcommand{\vec}[1]{{\bf #1}}
\newtheorem{lemma}{Lemma}
\newcommand{\T}{\mathcal{T}}
\newcommand{\Hi}{\mathcal{H}}
\newcommand{\outerp}[2]{\ket{#1}\!\bra{#2}}
\newcommand{\pj}{\mathcal{P}}
\newcommand{\up}[2]{#1^{(#2)}}
\newcommand{\supp}{\text{Supp}}
\newcommand{\cut}{\text{Cut}}
\newcommand{\mc}{\mathcal}
\newcommand{\mf}{\mathfrak}
\newcommand{\mb}{\mathbb}
\newcommand{\spn}{\text{Span}}
\def\beq{\begin{equation}}
\def\eeq{\end{equation}}
\def\bea{\begin{eqnarray}}
\def\eea{\end{eqnarray}}
\begin{document}
\title{Recoverable Information and Emergent Conservation Laws in Fracton Stabilizer Codes}

\author{A.T. Schmitz}\email{albert.schmitz@colorado.edu}
\affiliation{Department of Physics, University of Colorado at Boulder, Boulder CO 80309, USA}
\author{Han Ma}\email{han.ma@colorado.edu}
\affiliation{Department of Physics, University of Colorado at Boulder, Boulder CO 80309, USA}
\author{Rahul M. Nandkishore}\email{rahul.nandkishore@colorado.edu}
\affiliation{Department of Physics, University of Colorado at Boulder, Boulder CO 80309, USA}
\affiliation{Center for Theory of Quantum Matter, University of Colorado at Boulder, Boulder CO 80309, USA}
\author{S. A. Parameswaran}\email{sid.parameswaran@physics.ox.ac.uk}
\affiliation{The Rudolf Peierls Centre for Theoretical Physics, University of Oxford, Oxford OX1 3NP, UK}

\begin{abstract}
We introduce a new quantity, that we term {\it recoverable information}, defined for stabilizer Hamiltonians.  
For such models, the recoverable information provides a measure of the topological information, as well as a physical interpretation, which is complementary to topological entanglement entropy. 
We discuss three different ways to calculate the recoverable information, and prove their equivalence. To demonstrate its utility, 
 we compute recoverable information for  {\it fracton models} 
 using all three methods where appropriate. From the recoverable information, we deduce the existence of emergent $Z_2$ Gauss-law type constraints, which  in turn imply emergent $Z_2$ conservation laws for point-like quasiparticle excitations of an underlying topologically ordered phase.  
\end{abstract}
\maketitle
 
\tableofcontents
 
\section{Introduction}\label{sec:intro}
Ideas from quantum information have provided powerful tools for the study of complex quantum many-body systems and the myriad phases of matter they host. Quantum entanglement has been a uniquely influential concept in this regard --- specifically, its characterization in terms of {entanglement entropy}. While discussions of entanglement entropy pervade many areas of modern theoretical physics, it has proven particularly potent in the characterization of topological order (see, e.g., Ref.~\onlinecite{GroverReview} for a review). For two-dimensional gapped topologically ordered phases, the scaling of the entanglement entropy of a subsystem with its size contains a universal subleading `constant' term~\cite{KitaevPreskill, LevinWen}. This is intimately related to the topological quantum field theory (TQFT) description of the topological phase and provides a partial characterization of the nature of topological order. However, while two dimensional topological orders are relatively well understood, the book of topological order in higher dimensions is still being written. 

Indeed, the discovery of three-dimensional {\it fracton} models~\cite{Chamon2005,Bravyi2011,Haah2011,Yoshida2013,Vijay2015,Vijay2016,Nussinov2009,Nussinov2009(2)} has begun a new chapter in this book. These models exhibit many of the characteristic properties of topological order, such as locally indistinguishable degenerate ground states on manifolds of non-trivial topology, and local excitations that cannot be created by purely local operators. However, unlike more familiar topologically ordered states, the point-like excitations of these models are restricted to move only within lower-dimensional subspaces such as along a line or in a plane. This leads to a sub-extensive ground state degeneracy on manifolds of nontrivial topology, rather than the {\it finite} ground state degeneracy expected of topological orders described by emergent discrete gauge theories.
As such, fracton models have begun to draw intensive interest~\cite{Vijay2016,Williamson2016, Pretko1, Pretko2, Vijay2017,Hermele, Prem2017, Kim2017, Hsieh2017, Pretko2d, Slagle2017, Devakul, PetrovaRegnault, CMfractons}. Given the insights previously gleaned into topological order by the consideration of topological entanglement, an investigation of the topological entanglement of fracton models is both timely and interesting.  

Two recent works have made important progress in this direction. Ref.~\onlinecite{Lu2017} places a linear-scaling {bound} on the dependence of the non-local entanglement on subsystem size in  several fracton models, thereby establishing that the fracton models considered do possess non-local entanglement. Meanwhile in a recent paper \cite{companion1} we have explicitly computed the topological entanglement entropy of the same archetypal fracton models presented here. Our results are consistent with the bounds of Ref.~\onlinecite{Lu2017}, and  match those from a distinct approach that uses a tensor-network representation of the ground-state fracton wavefunction~\cite{aletBernevigTNSTEE}. Given the absence of a TQFT description of fracton models, however, it is natural to ask what {\it physical} interpretation may be ascribed to these results.
{Here, we attempt to address this question in the restricted setting of a special class of {\it stabilizer} Hamiltonians built from commuting sets of Pauli operators. For such models, we introduce a concept that we dub the `recoverable information', that is closely related yet distinct from the topological entanglement entropy. For a bipartition $(A,B)$ of a system described by an eigenstate of a stabilizer Hamiltonian, the recoverable information is given by
\begin{align}
\mu= \min_{\mc B}\left[ d_{\partial A}(\mc B) - S_A - S_B \right], \label{mudefintro}
\end{align}
where $S_{A (B)}$ is the entanglement entropy of $A(B)$ and $d_{\partial A}(\mc B)$ is a characterization of the number of stabilizers `cut' by the entanglement surface $\partial A$ for a choice of basis $\mc B$ for the group of all stabilizers. The final computation of $\mu$ entails minimization over all such basis choices. The precise definition of a stabilizer basis and how this minimization is accomplished are elaborated below.

{To understand the relationship between the recoverable information and the topological entanglement entropy, it is useful to recall two distinct interpretations that may be given to the topological entanglement entropy in $d=2$. Recall that the topological entanglement entropy is a negative quantity, indicating that there is {\it less} information lost than
one would na\"ively expect based solely on the area law  for gapped ground states, with no further assumptions. When a TQFT description exists and the topological phase is non-chiral, it is possible to relate the extra information to the geometric structure of the ground state wavefunction --- e.g. for the toric code/$Z_2$ gauge theory, this reflects the closed-loop constraint implied by the Gauss law for the $Z_2$ gauge field. 
Where such a description is absent, clearly we cannot immediately ascribe this additional `topological' information to a physical conservation law. However, by leveraging the special structure of stabilizer Hamiltonians, we show here that we can link the recoverable information to the existence of a set of independent $Z_2$ Gauss law-type constraints that must be satisfied with respect to the entanglement cut, permitting us in some cases to deduce that the phase is described by an underlying generalized `gauge' theory. It is this interpretation that we formalize in this paper.

}
Stabilizer codes were first introduced into quantum information theory in the context of quantum error correction and quantum memory \cite{Gottesman}. A stabilizer code is defined as a subspace of states stabilized, i.e. left invariant, by a set of mutually-commuting Pauli operators ---also known as the \textit{stabilizer set} ---  acting in some tensor product Hilbert space. Clearly, this is also the ground state manifold (`ground space') of the associated stabilizer Hamiltonian given by the negative sum over all members of the stabilizer set. If 
knowledge of the eigenvalues of the stabilizer set does not suffice to specify a complete basis for the Hilbert space, then there exists some complementary set of \textit{logical operators}; these form a Pauli algebra in the stabilizer code, where quantum information can be stored and processed. A logical operator will generally involve a large number of single-qubit Pauli operators, typically as many as characterized by the  \textit{code distance}, that in the ideal scenario for error correction, grows with system size. The resulting non-locality of the information stored via logical operators means that it is robust to errors, making quantum error correction feasible. The special structure of stabilizer codes and the associated mathematical machinery, reviewed in part below, are central in our ability to define and understand recoverable information.

At this point, it is worth remarking that, within the mathematical machinery used to study stabilizer codes, the choice of the stabilizer basis is a key {\it physical} input.  A spatially local basis choice is precisely what makes stabilizer codes useful in a physical setting, since most reasonable perturbations of an actual system will be spatially local. The information stored in logical operators is hence robust against such physical perturbations. Our choice of basis  thus implicitly gives locality a concrete meaning. It will be apparent in what follows that this restriction to a spatially local basis is also what makes the recoverable information a non-zero quantity, since it restricts the  possible allowed measurements  on a system. We will henceforth assume a spatially local basis (suitable defined on a lattice) unless otherwise specified.

The remainder of this paper is organized as follows. First, we review some basic facts about stabilizer codes and the models we consider here, and develop some additional machinery that we require later. We then motivate and define recoverable information and relevant concepts. This leads to the central result of this paper, namely that recoverable information 
counts the number of independent {\it non-local surface stabilizers} (NLSS). These are the generalizations of the non-contractible loop operators generated  in the toric code by the removal of a bounded subsystem. We also show that the NLSSs allow us to identify a set of generalized $Z_2$ Gauss law relations satisfied by the entanglement cut, and obtain a lower bound (tight for some models) for recoverable information. We conclude by calculating the recoverable information of different stabilizer models using three methods: (i) by using the definition; (ii) when tight, the lower bound; and (iii) by counting NLSSs.

\section{Stabilizer Hamiltonians and Fracton Models}\label{sec:models}

We begin this section by reviewing some basic results on stabilizer codes and the particular models we study to set the stage for the discussion, before deriving several new results that will be relevant in the remainder. In certain cases, we will draw upon results from Ref.~\onlinecite{Haah2013poly}, and in those cases we will often only provide details telegraphically or omit them entirely. 

\subsection{Models and Hilbert Space Structure}
Each of the models we study here consists of a set of $N$ qubits (i.e., spin-$1/2$ variables) arranged on the edges or vertices of a {simple} graph, with $q$ qubits per edge or vertex. We denote the graph set (vertices or edges) as $V$, so as to have $N = q|V|$. We use $|\ldots|$ to denote the size of a set. The Hilbert space $\Hi = (\mathbb{C}^2)^{\otimes N}$ is the product space of all the qubits. 
We will consider graphs with either periodic or open boundary conditions.

We define the {\it Pauli group} $P$ to be the set of all Pauli operators (that we denote $\{X,Y,Z\}$) of the qubits modulo any phase of $\pm1, \pm i$. {$P$ can be mapped to a vector space $V[P]$ over the field $\mb F_2$. Addition in $V[P]$ 
corresponds to multiplication of distinct Pauli operators, and scalar multiplication in $V[P]$ (modulo $2$, as appropriate in  $\mb F_2$)  is the power of a Pauli operator; the choice of field should now be obvious since every Pauli operator squares to the identity. For example, in a system of three qubits, $X_1 Y_3 \leftrightarrow \left(\begin{array}{ccc} 1 & 0& 1 \\ 0&0&1\end{array}\right)$, $Z_2 Z_3 \leftrightarrow \left(\begin{array}{ccc} 0 & 0& 0 \\ 0&1&1\end{array}\right)$, and their product $X_1 Z_2 X_3 \leftrightarrow \left(\begin{array}{ccc} 1 & 0& 1 \\ 0&1&0\end{array}\right)$ is given by adding the corresponding $V[P]$-vectors.  Additionally, commutation relations can be encoded by use of a symplectic form, but as we will not use this for what follows, we do not discuss it further. Since the set of single-qubit $X$- and $Z$-type Pauli operators generate all of $P$, we conclude that the dimension of $V[P]$ is $2N$, and that of the Pauli group $P$ itself is $2^{2N}$.} It is useful to define the {\it support} of a Pauli operator $p\in P$ as the set of vertices (edges) upon which it acts nontrivially:
\be
\supp(p) = \{ v \in V: p_v \neq I_{{v}}\},
\ee
where $I_{v} = (I_{2})^{\otimes q}$ is the identity operator on the $q$ qubits on vertex (edge) $v$. Note that we have implicitly defined a map $p_v$ from the set of vertices (edges) to the Pauli group. It is natural to likewise define the support of a set $F\subseteq P$ of Pauli operators, via $\supp(F) = \bigcup_{f\in F} \supp(f)$.

We identify a subset $\mc{S} \subset P$ of the Pauli group as the {\it stabilizer set}. $\mc S$ satisfies the following properties:
\begin{itemize}
\item $ [\mc S,\mc S]=\{0\}$, i.e. it is made of commuting operators
\item $|\mc S| \geq N$, i.e. there are at least as many stabilizers as there are qubits in the system
\item $-I_{\Hi} \not \in \mc S$, i.e. the negative identity is not in $\mc S$, so that all of $\mc S$ can have a positive eigenvalue
 \item $\supp(\mc S) =V$, i.e. every vertex (edge) is acted upon nontrivially by at least one stabilizer in $\mc S$.
\end{itemize}
The Hamiltonian is then taken to be the negative sum of all stabilizers:
\begin{align}\label{eq:stabhamApp}
H= - \sum_{s \in \mc S} s.
\end{align}

For  specificity, we now list the particular examples of stabilizer Hamiltonians that we focus on in this work. Each is defined on a Euclidean lattice of linear size $L$.} While our central focus is on fracton models, for pedagogical reasons, we will also discuss a few more conventional models, both with and without topological order, that serve to illustrate various properties of recoverable information as well as various   considerations necessary for its proper definition. The first of these is  the cluster model. On a $d$-dimensional hypercubic lattice with a single qubit per vertex, this has the Hamiltonian
\begin{align}
H_{\text{clu}_d}=- \sum_v K_v,
\end{align}
where $K_v$ is the product of the single $Z$-type operator at $v$ and all the $X$-type operators surrounding $v$.  (there are $2d$ such $X$ operators on a $d$-dimensional hypercubic lattice). Each stabilizer thus consists of $2d$ $X$-type operators `clustered' around a single $Z$-type operator, hence the name. On a $d$-dimensional hypertorus, the (unique) ground state of this model is not topologically ordered, but is instead either trivial or else in the symmetry-protected topological (SPT) phase~\cite{Raussendorf2005,Doherty2009,Montes2012,Lahtinen2015,Son2011,Cui2013}.

 We also discuss two simple instances of topological order, namely the $d=2$ and $3$ toric codes. These consist of one qubit per edge of the square or cubic lattice, and are described by the Hamiltonian
\begin{align}\label{eq:TCHam}
H_{\text{TC}_d} = - \sum_{v}A_v - \sum_{p} B_p,
\end{align}
where $B_p$ is the product of the four $Z$-type operators surrounding the plaquette $p$, and $A_v$ is the product of the $2d$ $X$-type operators connected to vertex $v$. The entanglement entropy of the $d=3$ toric code was studied in Ref.~\onlinecite{CastelnovoChamon2008}.

The two fracton models we consider are the `X-cube' model~\cite{Vijay2015, Vijay2016} and `Haah's code.'~\cite{Haah2011} The X-cube model is defined on a cubic lattice with a qubit   
on each edge, with Hamiltonian \begin{align}
H_{\text{XC}} = - \sum_{v} \left(A_v^{(xy)} + A_v^{(yz)} + A_v^{(zx)}\right) - \sum_c B_c ,
\label{eq:HXC}
\end{align}
{where $A_v^{(\mu)}$ is the product of the four $Z$-type operators that surround vertex $v$ in one of three mutually orthogonal planes (labeled by their normal direction $\mu$), and $B_c$ is the product of the twelve $X$-type operators surrounding the cube $c$}.  Haah's code is also defined on a cubic lattice, but now with two qubits on every vertex. The $Z$- and $X$-type stabilizer operators now consist of products of $Z$ and $X$  Pauli matrices around a cube with a Hamiltonian of the form
\begin{align}
H_{\text{Haah}} = - \sum_c( G_c^{Z} + G_c^{X}),
\end{align}
{where $G_c^{X(Z)}$ denotes the product of the Pauli operators specified in Fig.~\ref{fig:Haah_code} for cube $c$.}
\begin{figure}[t]
\includegraphics[width=.45\textwidth]{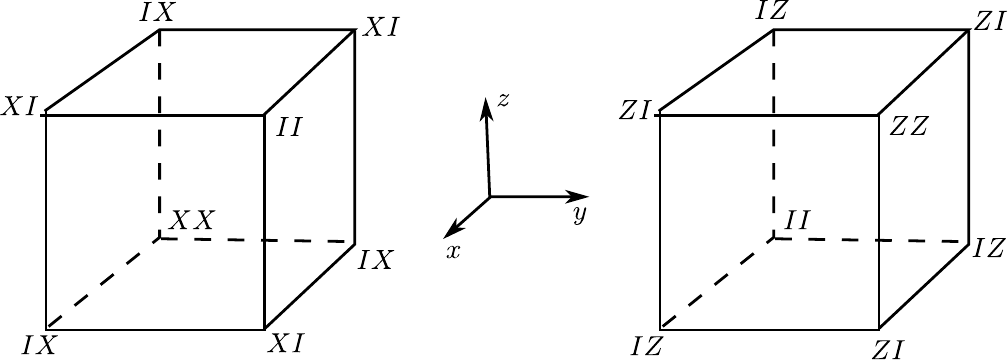}
\caption{$G_c^{X}$ and $G_c^{Z}$ terms in Haah's code. Each site has two spins. $X$ and $Z$ denote the corresponding Pauli operators. $I$ represents the  identity operator.\label{fig:Haah_code} }
\end{figure}
While much of the following discussion will deal with the abstract structure of stabilizer codes, these examples are worth keeping in mind as the particular models of interest, to which we will eventually turn. {Note that we do not discuss many of the striking properties of these fracton models in this work; we refer the interested reader to Refs.~\onlinecite{Chamon2005,Bravyi2011,Haah2011,Yoshida2013,Vijay2015,Vijay2016,Nussinov2009,Nussinov2009(2)} for more details.}

\subsection{Stabilizer Group and Constraint Subspaces}\label{sec:const}

{In this subsection, we review some  properties of stabilizer groups and their constraints. While much of this will be familiar to readers acquainted with stabilizer codes, the next subsections develop technical details relevant to the discussion in Section~\ref{sec:riintro} and so may be useful in the balance of the paper.}

The stabilizer set, $\mc S$, naturally generates the stabilizer group $G= \{\prod_{s \in F} s : F  \in  \mathbb{P}[\mc S]\}$,  where the power set of $\mc S$, denoted ${\mathbb{P}}[{\mc S}]$, is the set of all subsets of $\mc S$. In other words, the stabilizer group is the group generated by taking products of Pauli operators in $\mathcal{S}$. (Note the distinction between the stabilizer set $\mc S$ and the stabilizer group $G$ --- the latter is much larger than the former.)   We also impose an additional requirement of the stabilizer set, that, while not necessary from the perspective of error correcting codes and quantum information, is reasonable from a physical standpoint. Namely, we require that if a bijective mapping of the qubits into themselves induces an automorphism on $G$, then the same  mapping induces an automorphism on $\mc S$ consistent with the automorphism on $G$. In simpler terms, if the group has some symmetry, then we require that the stabilizer set (and hence, the Hamiltonian) has that same symmetry. 
 {Besides being physically reasonable, this has the consequence of imposing certain contraints in the stabilizer set, as we will discuss shortly.} 
 
Now, $G$ is isomorphic to a subspace $V[G] \subseteq V[P]$. 
This follows, again, from the prescription used to define $G$, and is related naturally to the commutative ring formalism~\cite{Haah2013poly} used in Section~\ref{subsec:recoverHaah}. We will denote  the dimension of the vector space $V[G]$ as $d_G$, so that the size of the stabilizer group is $|G| = 2^{d_G}$ (Note that we sometimes use the shorthand $\dim G$; this should be understood as the dimension of the vector space generated by group $G$, rather than the size of the group itself. As noted previously, we  reserve the symbol $|\ldots|$ for the size of a group or a set). 

The definition of $G$ implicitly defines a map  $\phi:\mathbb{P}[\mc S] \rightarrow V[G]$, that takes an element from the power set into $G$. Let us define a basis, ${\mc B} \in \mathbb{P}[\mc S]$, as a minimal generating set for $G$. We emphasize that this basis is a subset of $\mc S$ and does not contain any non-local stabilizers. Now, since $\mc B \in \mathbb{P}[\mc S]$ generates all of $G$ and is minimal, it follows from the definitions above that $\phi(\mb P[{\mc B}])$, the image of the power set of $\mc B$ under $\phi$, is all of the vector space $V[G]$, and thus $\mc B$ maps onto a basis for $V[G]$.
It then follows that $|\mc B| = d_G$. We define a {\it constraint} $C$ on the stabilizer group as a subset of $\mc S$,  $C\in \mathbb{P}[\mc S]$, such that  $\prod_{s\in C} s =I_{\Hi}$. 
  If $d_G <  |\mc S|$, then there must exist $|\mc S| - d_G$ \textit{independent} constraints in $\mc S$.  
{To define independence, we require a vector space structure.}
 Let $W[\mc S]$ be an {\it arbitrary} vector space over $\mathbb{F}_2$ with dimension $|\mc S|$, and hence has $|W| = 2^{|\mc S|}$ distinct elements. For concreteness, we may take $W[\mc S] = \{0,1\}^{|\mc S|}$ where addition is given by bitwise addition modulo 2 ($\oplus$). As this is the same size as $\mathbb{P}[\mc S]$, we can construct a bijection $f: W[\mc S]\rightarrow  \mathbb{P}[{\mc S}]$, with inverse $f^{-1}: \mathbb{P}({\mc S}) \rightarrow W[\mc S] $. In other words,  
 \be
 W[S] \overset{f}{\underset{f^{-1}}{\rightleftharpoons}}\mathbb{P}[{\mc S}] \overset{\phi}{\rightarrow} V[G].
 \ee
Therefore, $\phi \circ f: W[\mc S] \rightarrow V[G]$ is a surjective linear map between vector spaces. Since $f$ is bijective, it follows that the pre-image under $f$ of the kernel of $\phi$ is itself the kernel of $\phi \circ f$, i.e. $\text{ker}(\phi \circ f) = f^{-1}(\text{ker}(\phi))$. Now,  $\text{ker}(\phi)$ contains all sets of stabilizers that map to the identity in $V[G]$, i.e. every element of $\text{ker}(\phi)$ is a constraint. We may now identify the set of constraints as ${\mc C} = \text{ker}(\phi)$. Since $f^{-1}(\mc C) = \text{ker}(\phi \circ f)$ is the kernel of a linear map between vector spaces, it must be a subspace of $W[\mc S]$ and therefore is itself a vector space with its own basis and dimension. So we define a set of constraints to be independent if the preimage under $f$ of this set is independent in $W[\mc S]$. Similarly, we can identify a basis for the constraint set $\mc C$ as the image of a basis for $\text{ker}(\phi \circ f)$ under $f$. Since the orthogonal compliment of $\text{ker}(\phi \circ f)$ consists of every member of $W[\mc S]$ which maps injectively onto $V[G]$, we then have $\dim\, \mc C =   \dim[\text{ker}(\phi \circ f)]  = \dim\, W[\mc S] - \dim\,V[G]  = |\mc S| - d_G$ {as we asserted earlier.}
 We will henceforth refer to $\mc C$  as the `constraint subspace', with the understanding that we refer to this structure. 

\subsection{Logical Pauli Operators} \label{sec:locP}
We next identify a special class of \textit{logical} Pauli operators. If a Pauli operator $p$ anti-commutes with  {any} member of $\mc S$, it creates an excitation above the ground state of the stabilizer Hamiltonian \eqref{eq:stabhamApp}. This implies that $\up{\pj}{\vec{0}} p \up{\pj}{\vec{0}}  =0$, where $\up{\pj}{\vec{0}}$ is a projector onto the ground space. If instead $p$ commutes with all members of $\mc S$,   
then $p$ can be separated as $p= p_\ell p_G$ where $p_G \in V[G]$ and $p_\ell \in V^\perp[G]$, where $V^\perp[G]$ is the complement of $V[G]$ in  $V[P]$. If in this decomposition we have $p_\ell = I_\Hi$, then $p \in V[G]$ and we actually have  $\up{\pj}{\vec{0}} p \up{\pj}{\vec{0}} =\up{\pj}{\vec{0}}$. {This follows since the ground space (the stabilizer code) is stabilized by all of $G$.} 
 So the set of logical Pauli operators, that we denote $ P_\ell$, is the maximal set such that
\begin{itemize}
\item $[\mc S,  P_\ell] = \{0\}$, and
\item $P_\ell \subseteq V^\perp[G]$.
 \end{itemize} 
Given that for any $p$ \textit{not} in $ P _\ell$, $\up{\pj}{\vec{0}} p \up{\pj}{\vec{0}} \propto \up{\pj}{\vec{0}}$, it follows that only logical operators can distinguish between different sectors of the ground state manifold, as any other operator acts trivially when restricted to that manifold. 
This is equivalent to the statement that the quantum information contained in the stabilizer code is difficult to modify without performing non-local operations, and is the origin of their topological robustness.
Furthermore, a maximal set of mutually commuting, independent operators in $ P_\ell$ generates the {\it logical Pauli group}, $G_\ell$. It then follows from the above definitions that $G_\ell$ is isomorphic to a vector space $V[G_\ell]$ over $\mathbb{F}_2$. 
 Since the total number of mutually commuting, independent operators cannot exceed $N$, we may show (see Ref.~\onlinecite{Haah2013poly} for a proof) that $ d_\ell \equiv \dim(G_\ell)  = N-d_G$.

The logical Pauli operators define the {\it code distance}  $D= \min \{d\in \mathbb Z: \supp(p) \subseteq B_v(d) \text{ for some } v \in V \text{ and } p \in P_\ell\}$. Here $B_v(d)$ is a ball about vertex $v$ of radius $d$ using some appropriate distance function on the graph. For example, we could use a Euclidean distance for some embedding of the graph, or the actual graph distance. The choice of distance function must be tuned to the objective. For our purposes, {we wish to relate logical operators to topological information, i.e. non-local correlation, and so we use a Euclidean distance function as it has the most natural link to spatial locality.}  For quantum error correction on the other hand, we might instead restrict the distance function to simply count the smallest number of bit flips needed to change the value of a logical operator (i.e. the size of the support). For all systems considered here and either distance function, we have $D \sim L$, the linear size of the system.   The code distance is the size of the smallest logical Pauli operator, and plays an important role in providing a topological perspective on stabilizer codes, as it gives a sense of which operators are local.  To wit, a subset $A \subseteq V$ is  \textit{bounded} iff there exists a vertex $v$ such that $A \subseteq B_v(D)$, and an operator is bounded iff its support is bounded. A set of operators is {\it local} iff every member is bounded. Since no logical operator is bounded, and these are the only ones that can distinguish between distinct ground states, this shows 
that the ground states are indistinguishable to local operators. This indistinguishability is what makes information encoded in the ground space robust against decoherence, since the sources of decoherence are generically local perturbations that cannot `measure' an eigenstate and are hence unable to decohere it.

{The Pauli group has another important property provided by the \textit{cleaning lemma}~\cite{bravyi2009no,yoshida2010framework,haah2012logical,Haah2013poly,flammia2016limits}. This states that for any bounded subset of the space, $A$, any member of $p_\ell \in P_\ell$ can be `cleaned' out of $A$. That is, there exists an element $g \in G$, such that $gp$ is not supported in $A$. For the example of the toric code on a torus, this corresponds to string-like logical operators which encircle any one direction and can be deformed around any $A$, so long as $A$ itself does not wrap the torus in any direction. The cleaning lemma is actually far more general~\cite{Haah2013poly} than the statement above, but this is sufficient for our purposes.
}
   
\subsection{Topological Constraints and Boundary Conditions} \label{sec:tconst}

With the notion of locality defined as above, we may now apply these ideas to the constraint subspace $\mc C$, and link it with the logical Pauli operators.
 If $ d_G < N$, then from our definition $d_\ell>0$. Furthermore,  since $|\mc S|\geq N$ we have $\dim{\mc C} \geq d_\ell>0$, i.e. the set of constraints is non-trivial.
  
   A set of constraints $\mc T \subseteq \mc C$ is  \textit{topological} if it satisfies the following conditions:
\begin{itemize}
\item $\mc T$ is a subspace of  $\mc C$ (as understood in $W[S]$); and

\item No nonempty member of $\mc T$ remains a constraint if we go from periodic (pbc) to open (obc) boundary conditions.

\end{itemize}

This definition invokes the process of changing boundary conditions from periodic to open that we must make more concrete. We do so  by first discussing the illuminating example of the toric code. With pbc, we have two constraints: the product of all stars and the product of all plaquettes always contain their respective type of single-qubit Pauli operator twice for every edge, and therefore must equal the identity.  When changing boundary conditions, we retain the original Hilbert space, but modify the Hamiltonian. For stabilizer codes, this corresponds implicitly to changing the stabilizer set while preserving all the original qubits. Thus, for the toric code case,  obc requires that we consider a planar graph with two `ragged' edges and two `smooth' edges (see Fig. \ref{fig:bound}); we must then consider carefully all stabilizers affected by this change in boundary conditions. The two ragged edges have complete star operators at each vertex, but incomplete, U-shaped plaquettes; nevertheless, these all commute with each other and with the complete stabilizers in the interior of the graph.
Similarly, the smooth edges have complete plaquettes, but T-shaped `stars' that are all mutually commuting. While the {\it total} number of stabilizers is unchanged, both constraints are lost, since some of the Pauli operators on the edges only appear once when taking the product of all the star or plaquette stabilizers. Simultaneously, we observe that the number of stabilizers is equal to the number of qubits so that there can be no logical operators. Instead, the product of all stars (that equals the identity with pbc) is now an L-shaped string-like  operator on the two perpendicular ragged edges; this is the descendant of what would be a {logical} operator with pbc but is included in the stabilizer group with obc. A similar statement holds for plaquettes. Thus, the change of boundary conditions provides a natural link between constraints and logical operators: every time a change in boundary conditions removes a constraint, a logical operator is transformed into a string operator included in the {\it modified} stabilizer group.

We may formalize this intuition by defining a {\it change of boundary conditions}  to be a map from the original set of stabilizers $\mc S$ to an alternative stabilizer set $\mc S^\prime$. $\mc S^\prime$ satisfies all the conditions of a stabilizer set, but generates a group  $G^\prime$ that is not equal to $G$. We also require that $|\mc S \cap \mc S^\prime| \sim \mc O (L^{d})$, i.e. their intersection is extensive (well-defined on a lattice).  Finally, for every member $s^\prime \in \mc S^\prime$ there exists a unique member $s \in \mc S$ such that $\supp_\mu(s^\prime) \subseteq \supp_\mu(s)$, where we use the subscript $\mu$ to emphasize that we consider the support of the same Pauli operator in each case. Thus,  every member of 
$\mc S$ has a corresponding member of 
$\mc S'$ which is either the same or has some ``deleted'' support. As a corollary to the last condition $|\mc S| = |\mc S^\prime|$. If there exists a change of boundary conditions  $\mc S\rightarrow\mc S^\prime$ such that a subspace of $\mc C$  (the constraints in  $\mc S$) is lost, then the change has removed topological constraints. The maximal set of constraints lost by a {set of equivalent changes of boundary conditions (where equivalency is related by symmetry where it exists)} is then the set $\mc T$ of topological constraints. Clearly $ \dim(\mc  T)\leq d_\ell$ since every time the constraint subspace dimension is lowered by one, $d_G$  increases by one until it is equal to $N$, at which point one has exactly enough independent stabilizers to specify a complete eigenbasis for $\Hi$. Since a change in boundary conditions maintains the same number of stabilizers, all other constraints beyond a number $d_\ell$ must necessarily remain unchanged, leading to the bound. Therefore, we conclude that the subspace of topological constraints is isomorphic to a subset of  the logical Pauli group $\mc T  \cong G_{\mc T}\subseteq G_\ell$. 

It is worth noting that {\it typically} any topological constraint will `wrap' all the way around the system (defined with periodic boundary conditions) and thus will have non-local support in the thermodynamic limit. This justifies an implicit assumption that we make below: namely that non-topological constraints are typically (though possibly not always) local constraints. Note that in all the examples studied in Section~\ref{sec:compute}, this can be verified explicitly. {In the remainder, we will assume that topological constraints are in one-to-one correspondence with logical operators.}

\subsection{Entanglement Entropy of Stabilizer Eigenstates} \label{sec:ee}

We now calculate the entanglement entropy of any eigenstate of the stabilizer Hamiltonian {(i.e., {\it any} simultaneous eigenstate of all members of $G$ and $G_\ell$)}  for a bipartition $(A,B)$ of the space $\Hi$. This derivation reviews and extends the results from Refs.~\onlinecite{zou-haah,hamma2005bipartite,fattal2004entanglement}; more details and explicit computations using this approach may be found in our previous work~\cite{companion1}. Let $A$ be bounded as defined in Section \ref{sec:locP}. Also let our basis for $G$ be $\mc B$ as before and $\mc B_\ell$ be a basis for $G_\ell$. With the cleaning lemma, we can assume without loss of generality that the support of $\mc B_\ell$ is contained entirely in $B$.

{
As before, let $W[\mc B] = \{0,1\}^{d_G}$ so that elements $g\in G$ may then be labeled by a binary vector $\vec{n} \in W[\mc B] \simeq V[G]$, via
\be\label{eq:gofndef}
g(\vec{n}) =  s_1^{n_1} s_2^{n_2}\ldots s_{d_G}^{n_{d_G}}
\ee
where $s_i \in \mc B$ and $n_i$ is the $i^{th}$ bit of the string $\vec{n}$. Note the ordering of Eq. \eqref{eq:gofndef} defines the isomorphism for $ W[\mc B] \rightleftharpoons V[G]$. As this is isomorphic, we see that $(W[\mc B], \oplus)$ is a faithful representation of $G$. Furthermore, one can see that the irreps of $G$ can be labeled by another string $\vec{k} \in W_{\text{irrep}}[\mc B]= W[\mc B]$ and given by $\{(-1)^{\vec{k} \cdot \vec{n}}\}_{\vec{n}}$, where $\vec{k}\cdot \vec{n}$ is the binary dot product. This should be obvious as any Abelian group has 1D irreps and one can show that 
\begin{align}
(-1)^{\vec{k} \cdot (\vec{m} \oplus \vec{n})} = (-1)^{\vec{k} \cdot \vec{m}}(-1)^{ \vec{k} \cdot \vec{n}}.
\end{align}
Thus they preserve the multiplication rules for $W[\mc B]$. It then follows from standard orthogonality relations that projection operators can be formed as
\be \label{eq:proj}
\mathcal{P}^{(\vec{k})} = \frac{1}{|G|} \sum_{\vec{n}} (-1)^{\vec{k}\cdot\vec{n}} g(\vec{n}).
\ee
As the stabilizer Hamiltonian is just a sum over group elements, it is easy to show that these operators project onto the various eigenvalue subspaces as labeled by the quantum number, $\vec{k}$. Clearly, this labels the stabilizer excitations and so $\vec{k} =0$ is the ground space i.e. the one with no excitations. To find the dimension of these subspaces, we can take the trace and use the fact that $\tr(g(\vec{n})) = N \delta_{\vec{n},0}$, to find that for any $\vec{k}$, the \textit{topological} degeneracy is $N-d_G = d_\ell$, as expected. (Note that for a translationally-invariant code there are additional degeneracies among the various values of $\vec{k}$, hence our use of the term {\it topological} degeneracy to distinguish that linked to the logical Pauli operators). To obtain pure state projectors and hence the density matrix, we can construct analogous projection operators using $\mc B_\ell$, %
\begin{align}
\mathcal{P}^{(\vec{a})}_\ell =  \frac{1}{|G_\ell|} \sum_{\vec{z}} (-1)^{\vec{a}\cdot\vec{z}} \mf g(\vec{z}),
\end{align}
where $\vec{a}, \vec{z} \in W[\mc B_\ell]$ and $ \mf g(\vec{z}) \in G_ \ell$ as given by a basis expansion analogous to Eq. \eqref{eq:gofndef}. This gives the pure state density matrix 
\begin{align}
\outerp{\vec{k}, \vec{a}}{\vec{k}, \vec{a}} =\mathcal{P}^{(\vec{k})} \mathcal{P}^{(\vec{a})}_\ell
\end{align}
With an explicit form of the density matrix, we take the partial trace to get the reduced density matrix of $A$ as
\begin{align}\label{eq:rhoastep1}
\rho_A =& \Tr_B \outerp{\vec{k}, \vec{a}}{\vec{k}, \vec{a}} = \nonumber \\
=&\frac{1}{|G| |G_\ell|} \sum_{\vec{n}, \vec{z}} (-1)^{\vec{k}\cdot\vec{n}} (-1)^{\vec{a}\cdot\vec{z}} \Tr_B \left(g(\vec{n}) \mf g(\vec{z})\right). 
\end{align}
Any element $ g(\vec{n}) \mf g(\vec{z})$ that is {\it not} equal to the identity on $B$ contains at least one $X$ or $Z$ operator acting in $B$ and thus has a zero partial trace. The only non-zero contributions to the sum in Eq. \eqref{eq:rhoastep1} is from operators supported only on $A$, i.e. equal to the identity on $B$.  For the logical group, only $\vec{z} =0$ survives as any other value of $\vec{z}$ is not the identity in $B$. However, there is an entire subgroup, $G_A \subseteq G$, which survives the trace. Such a subgroup is defined as all elements of $G$ which only have support in $A$. As a subgroup, it has an analogous basis labeled by $\vec{n}_A$ and excitations labeled by $\vec{k}_A$, where we assume that $\vec{k_A} \cdot \vec{n}_A = \vec{k} \cdot \vec{n} \mod 2$ for the same element as labeled by $\vec{n}_A$ in $G_A$ and $\vec{n}$ in $G$. Since the identity on $B$ has trace $\Tr I_B = 2^{|B|}$ we have that
\bea
\rho_A &=& \frac{2^{|B|}}{|G||G_\ell|} \sum_{n_A}  (-1)^{\vec{k}_A \cdot \vec{n}_A} g_A(\vec{n}_A)\nonumber\\ 
&=&\frac{2^{|B|}|G_A|}{|G||G_\ell|}\up{\pj}{\vec{k}_A}_A.\label{eq:rhoastep2}
\eea
 Since Eq. \eqref{eq:rhoastep2} expresses $\rho_A$ as a projector, its entanglement entropy follows straightforwardly:
\be\label{eq:SAcount}
S_A = (|B| - d_G -d_\ell) -d_{G_A} = |A| - d_{G_A} ,
\ee
where $d_{G_A} = \dim G_A$  and we have used the fact that $N = |A| + |B| = d_G + d_\ell$. 

This same calculation can be carried out from the perspective of $B$, and we would similarly find that there exists a subgroup, $G_B \subseteq G$, consisting of all elements of $G$ which only have support in $B$. The only difference in the argument is that all of $G_\ell$ survives the partial trace over $A$, so that the reduced density matrix for $B$ is now a projection defined by the group $G_B \times G_\ell$. Here, the $\times$ product of two groups generates all products between elements of the two groups. Note that taking such a product of trivially intersecting Abelian groups generates the $\mb F_2$ vector  space given by the direct sum of each individually, i.e. $V[G_B \times G_\ell] = V[G_B] \oplus V[G_ \ell]$. Defining $d_{G_B} = \dim G_B$, one has that the entanglement entropy for $B$ is
\begin{align}\label{eq:SBcount}
S_B = |B| - d_{G_B} -d_\ell.
\end{align}
Even though $S_A = S_B$ as the original state was pure, we will use both forms in the discussion of recoverable information.
}

\section{Recoverable information in stabilizer codes}\label{sec:riintro}

We now introduce the concept of `recoverable information.'   In a subsequent section we will show that for stabilizer codes, this quantity is intimately tied to {topological properties of the stabilizer set, and hence to those of the Hamiltonian naturally generated by it.}  
 Furthermore, it can be calculated with information obtained entirely from system $A$, and in this sense, is locally measurable.

 To motivate the idea of recoverable information, consider the following situation. A binary password  
is encoded into the eigenstate of a stabilizer code of linear size $L$ (the { $\sim L^d$} bits of the password being simply the eigenvalues of the stabilizer and logical operators). Then 
an $R^d$ sub-region $A$ of the eigenstate is removed and given to Alice, with the remaining piece $B=\bar{A}$ given to Bob. Alice and Bob are only allowed local operations and measurements, while afterwards, they are allowed to combine all measurement results via classical communication.  
How many bits of the binary password can they extract? 

 Since the stabilizers form a commuting set and the wavefunction is an eigenstate of the stabilizers, Alice can faithfully and without disturbing the state measure all the bits of the password encoded in stabilizers contained entirely in her subsystem. Bob can do the same with its complement. However, because Alice's density matrix is proportional to a projection, she is in a completely mixed state for all the degrees of freedom which lie on the boundary of A. That is, if she tries to measure any of the partial stabilizers that were cut when separating $A$ and $B$, she could measure $\pm1$ with equal probability. Na\"{i}vely, Alice and Bob might assume they have lost all the information about cut stabilizers, but the {\it actual} amount of information lost is given by the entropy for each of their states. There is thus a difference between the na\"ive and actual information loss, viz.
 \begin{align}
\mu(\mc B)= 
d_{\partial A}(\mc B) - S_A - S_B, \label{mudef}
\end{align}
where $d_{\partial A}$ is the number of basis stabilizers cut by the entanglement surface.   Eq. \eqref{mudef} emphasizes that value of $d_{\partial A}$ can depend on the choice of stabilizer basis, and a `bad basis' in which the stabilizers are less spatially local will lead to  more cut stabilizers and hence a higher value of $\mu$.  So, in order to unambiguously define the  recoverable information we  minimize \eqref{mudef}  over all possible choices of stabilizer basis (effectively assuming obc as discussed below). This removes the basis-dependence, leading to the definition \eqref{mudefintro} quoted in the introduction. This definition implicitly involves the `most spatially local' set of stabilizers (analogous to the `clipped stabilizers' of Ref.~\onlinecite{Nahum}). A second caveat also implicit in this definition is that the set of cut stabilizers should not include any {\it logical} operators. This is trivially true for a system with open boundary conditions. For a system with periodic boundary conditions, this may be ensured by working in a basis where the logical operators are contained entirely in $B =\bar{A}$ as was assumed in Section \ref{sec:ee}.

This discussion motivates a complementary perspective on recoverable information. Imagine  we cut the system as described, but also {\it forbid} Alice and Bob from communicating by any means. They then perform all their measurements of any combination of stabilizers accessible to them, and use this to estimate how much information they have lost about the system. Then, classical two-way communication between Alice and Bob is reestablished, allowing them to compare their measurements; it is possible that in the process, they can get some additional information that was missing when they were forbidden from communicating. It is this additional information that is the `recoverable' information.

How may we compute the recoverable information of a bipartition? The 
most na\"ive procedure is to count cut stabilizers (suitably defined) to determine $d_{\partial A}$, and then compute the entanglement entropy using the methods of Section~\ref{sec:ee}.  An alternative approach is to count topological constraints. This provides a lower bound on $\mu$ that is tight for some of the models studied here. Finally and most powerfully, we will show in this section that $\mu$ may be determined directly by counting the number of independent NLSSs. 
Each NLSS is typically a product of stabilizers in $A$ and on its boundary, which has support only in $B$. We will demonstrate the use of each of these methods in the examples of Section~\ref{sec:compute}.

\subsection{Warm-up: the $d=1$ cluster model and the $d=2$ toric code} 

Before embarking on an abstract analysis of recoverable information, we discuss two simple illustrative examples, the $d=1$ cluster model and the $d=2$ toric code. The $d=1$ cluster model with periodic boundary conditions is not topologically ordered (i.e., it is short range entangled and has no ground space degeneracy), and so we should expect that its recoverable information is zero. Note that it also has no constraints, topological or otherwise. We take subsystem $A$ to be a contiguous set of $R$ qubits. Every stabilizer in $A$ except for the two vertices at either end are completely in $A$ and so applying Eq. \eqref{eq:SAcount} we find the entropy is $S_A =2$. As for cut stabilizers, both edge vertices in $A$ and edge vertices in $B$ represent cut stabilizers, in which case the recoverable information is
\begin{align}
\mu_{\text{clu}_1} = 4- 2(2) = 0.
\end{align}
So in this case, Alice and Bob gain no additional information when allowed to communicate classically. We will discuss cluster models in $d>1$ in Sec. \ref{sec:computeclu}.

We now turn to the more interesting case of the $d=2$ toric code. We take subsystem $A$ to be the square $R \times R$ cut (in edges) so that Alice does not have `ragged' edges, but Bob does. This is depicted in Fig.~\ref{fig:bound}. Then, Alice has parts of cut plaquette and vertex operators. {Consider only the cut vertex operators for Alice. Since they still commute with all complete stabilizers in $A$,} they still represent good qubits, the measurement of which will not disturb the other stabilizer qubits. Likewise, the cut plaquette operators still commute with all of Bob's stabilizers and thus are good qubits for him. 
There are $4R$ such good boundary qubits for Alice and $4R$ good boundary qubits for Bob, all of which are apparently in a mixed state ($\pm1$ with equal probability). However, the actual amount of lost information for either Alice or Bob is given by the entanglement entropy, $S_A =4R -1$. We thus conclude that the recoverable information is 
\begin{align}
\mu_{\text{TC}_2} =2 \times 4R - 2(4R-1)) = 2 ,
\end{align}
which is twice the topological entanglement of the toric code.  The existence of this recoverable information comes from the fact that there is a constraint between Alice's stabilizers and boundary qubits, namely that
\begin{figure}[t]
\centering
\centering
\includegraphics[scale=.2]{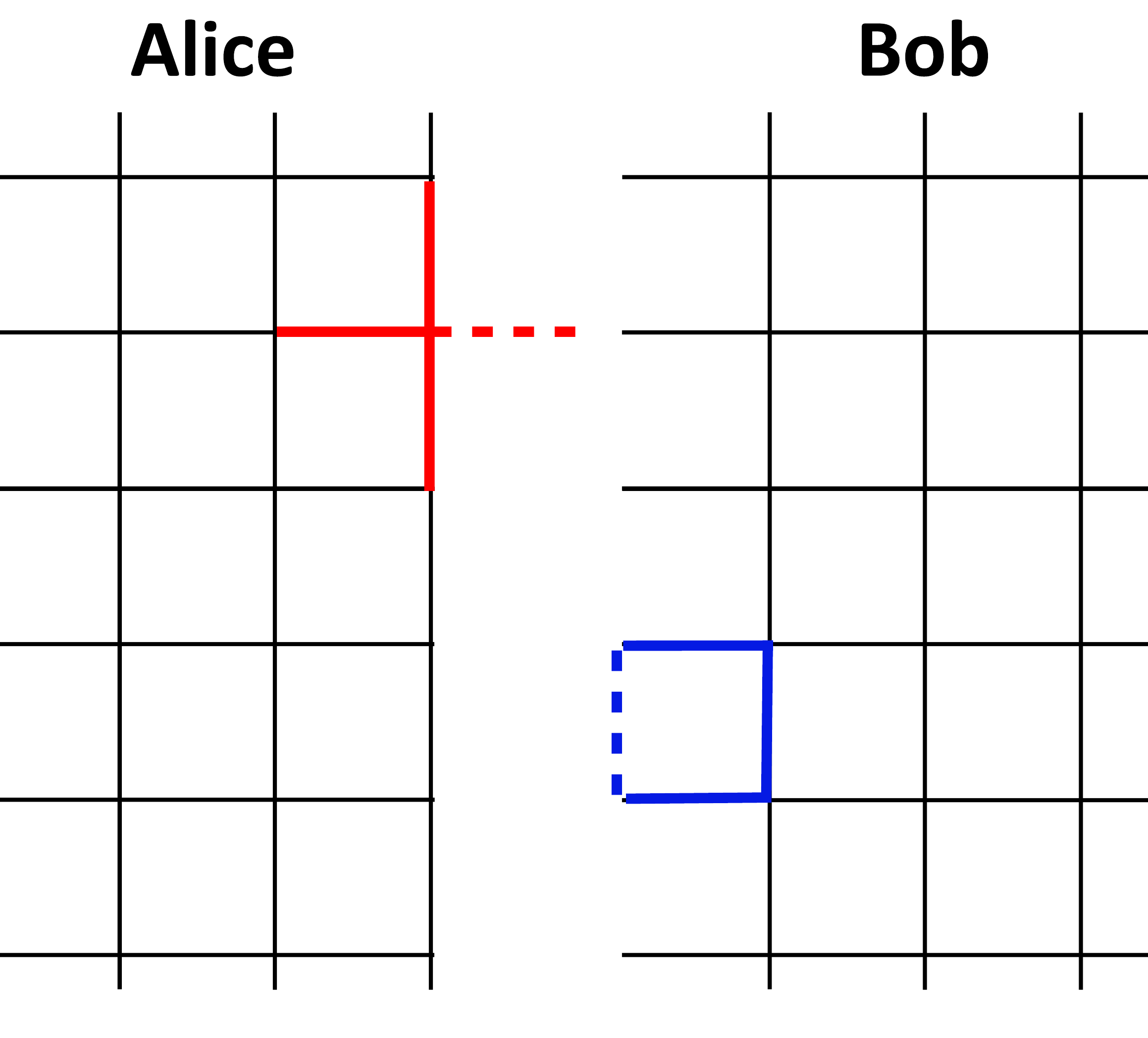}
\caption{Depiction of the boundary of Alice and Bob's system. The red (blue) edges represent a good cut stabilizer for Alice (Bob). The dotted edges are not actually present but represent the completion of the cut stabilizer.}\label{fig:bound}
\end{figure}%

\begin{align}\label{eq:toricgl}
\prod_{v \in A} (A_v)_A =I_A,
\end{align}
--- in other words, the product of all the vertex stabilizers, both cut and whole, is the identity in $A$. This implies that Alice has one additional bit of information in her system. To understand the nature of the information recoverable by Alice and Bob, and how they can recover it in practice, note that  there are global constraints on the stabilizers, namely 
\begin{align}
\prod_p  B_p = I_{AB} \text{ and } \prod_v A_v= I _{AB}.
\end{align}
Alice can measure her vertex operators (all via local operations) and communicate the results (purely classical information) to Bob. Bob  now measures all his vertex operators and together with the constraint, 
%knows that
%
\begin{align}
\prod_{v\in \partial A} A_v = \prod_{v\in \text{bulk}(A)}A_v \prod_{v\in \text{bulk}(B)}A_v
\end{align}
In this way, Alice and Bob have gained one additional piece of information about the system than they would na\"ively have expected, since they now know the {\it parity} of the boundary stabilizers. Likewise, they can do the same for the plaquettes, so that the actual information gained is two bits. We may also attribute this result to the presence of two NLSSs. Each consists of a string loop (one of each type, $X$ and $Z$) in $B$ which are non-contractible due to the hole created by removing $A$. These strings are created by the product of all stabilizers of a given type in $A$ and on the boundary.  We also note that we have cut two independent topological constraints.

To relate this back to topological considerations, we mentioned in the introduction that the topological entanglement entropy of the toric code can be related to a geometric constraint on the wavefunction. In the loop condensation picture, the net parity of edges perpendicular to the boundary as measured in the $X$ basis must be $1$, i.e. every loop in the condensate is cut twice by the boundary. This is shown to be the source of the one bit of topological entanglement entropy. However as we have shown, this bit is not special to a ground state: all {simultaneous eigenstates of the stabilizers (and members of the logical group)} have the same entanglement entropy and thus the same topological entanglement entropy. This is because the constraint for the ground state is generalized to a $Z_2$ Gauss law for excited states. Namely, the parity of the boundary is equal to the number of topological charges within the bounded region modulo $2$. An equivalent statement can be made for the magnetic sector. To see that these two Gauss's laws correspond to the two bits of recoverable information, we rewrite Eq. \eqref{eq:toricgl} as
\begin{align}\label{eq:toricgl2}
\prod_{v \in \text{bulk}(A)} A_v = \prod_{v \in \partial A} (A_v)_A.
\end{align}
This is exactly  the electric $Z_2$ Gauss's law. An analogous statement can be made for the magnetic sector using plaquette stabilizers. We show that this is not coincidental; recoverable information for other stabilizer models  also matches the number of independent $Z_2$ Gauss law-type relations satisfied by the entanglement cut. {We return to the $d=3$ toric code in Sec. \ref{RIforTC3D}.
}

\subsection{Recoverable Information and Topological Properties of Stabilizer Codes \label{sec:relation_recover}}

With these preliminaries, we may now clarify various facts about recoverable information. 
{In Section~\ref{sec:riintro}, (see \eqref{mudef} and the subsequent discussion)
we defined the recoverable information via a process of minimization over choices of stabilizer basis, leading to \eqref{mudefintro}, reproduced here for convenience: %viz. 
\begin{align}\label{eq:fullmudef}
\mu= \min_{\mc B} [d_{\partial A}(\mc B) - S_A - S_B].
\end{align}}
We now discuss how to compute the recoverable information via three different methods.
 
 \subsubsection{Computing recoverable information by counting cut stabilizers}
 Using \eqref{eq:SAcount} and \eqref{eq:SBcount} in  \eqref{eq:fullmudef}, and 
 $|A| +|B| = N = d_G + d_\ell$ again, we have
{\begin{align}
\mu=\min_{\mc B} [ (d_{G_B} + d_{G_A} + d_{\partial A}(\mc B)) - d_G]. \label{muinter}
\end{align}}

 To go further and  {compute $\mu$ directly by counting stabilizers} we need to clarify the meaning of $d_{\partial A}(\mc B)$, which will also motivate our minimization procedure for \eqref{muinter}. 
{Recall that we chose the basis $\mc B$  to be a subset of $\mc S$; this is always possible as we generated $G$ from $\mc S$.} Given any basis $\mc B$, we can clearly partition it as $\mc B = \mc B_{A}\cup \mc B_B\cup   \mc B_\partial $  where $\mc B_A=\{g \in \mc B: \supp(g) \subseteq A\}$, $\mc B_B =\{g\in \mc B:\supp(g) \subseteq B\}$, and $\mc B_ \partial =\{ g\in \mc B: \supp(g) \cap A \neq \emptyset \text{ and } \supp(g) \cap B \neq \emptyset\}$, respectively corresponding to basis elements entirely in $A$, entirely in $B$, and supported in both. Note that it is not necessarily true that $G_A \cong \text{Span}({\mc B}_{A})$ and $G_B\cong \text{Span}({\mc B}_{B})$: the subgroup of the stabilizer group obtained by  restricting its support to some subregion need not coincide with the subgroup generated by the restriction of the basis to that subregion.
Below, we will frequently and without explicit comment use subscripts $A, B$ and $\partial$ to denote the partitioning of a set into elements only supported in $A$, only supported in $B$, and supported in both $A$ and $B$.
It follows that 
 \be d_G = |\mc B_A| +|\mc B_B| + |\mc B_\partial|. \label{eq:dGbasisdef}
 \ee
 $\mc B_\partial$ in turn generates the {\it cut stabilizer group} {$G_\partial= \spn(\mc B_\partial)$} with dimensions $ |\mc B_\partial|$,  motivating  us to define $d_{\partial A} = |\mc B_\partial|$. {Now, we may easily show that  $G_A \times G_B \times G_\partial =G$; note that {$\text{dim } G$ is not necessarily equal to $\text{dim } G_A +\text{dim } G_B + \text{dim } G_\partial$}. Then, from \eqref{muinter}, we see that the only way for the recoverable information to be nonzero is if there is some nontrivial pairwise intersection between  $G_A$, $G_B$ and $G_\partial$ (for, if they were disjoint then $d_G = d_{G_B} + d_{G_A} + d_{\partial A}$). 
 Note that $G_A$ and $G_B$ correspond to disjoint subspaces, since any product of two elements of $ G_A$ is also in $G_A$, and likewise for $G_B$, and the support of $G_A$ is disjoint from that of $G_B$.   
  
We note that the definition of $G_\partial$} is basis-dependent, but this is expected given the motivation for recoverable information:  the encoding of a password in stabilizer quantum numbers is inherently basis-dependent.  
{However, we can restrict the basis choice by exploiting constraints and the definition of recoverable information from Section~\ref{sec:riintro}. By doing so, we automatically perform the minimization procedure that makes the recoverable information unambiguous. 
Our approach, elaborated below, will also resolve subtleties in quantifying the quantum information encoded in sets of stabilizers that are required to satisfy local constraints.}   

From the definition of $d_{\partial A} = |\mc B_\partial|$ and \eqref{eq:dGbasisdef} we see that \eqref{muinter} reduces to 
{\begin{align}
\mu=& \min_{\mc B} [(d_{G_A} - |\mc B_A|) + (d_{G_B} - |\mc B_B|)]  \nonumber \\
=& \min_{\mc B} [\dim\left(G_A / \spn(\mc B_A) \times G_B / \spn(\mc B_B)\right)]. \label{eq:mufin}
\end{align}}
{The second equality follows from the fact that $G_A / \spn(\mc B_A)$ and  $G_B / \spn(\mc B_B)$ are disjoint subgroups of $G$ with dimensions $d_{G_A} - |\mc B_A|$ and $d_{G_B} - |\mc B_B|$, respectively. Note \eqref{eq:mufin} proves that  $\mu$ is the {(minimum)} dimension of a group and is hence non-negative, $\mu \geq 0$.

We have yet to fully fix the basis that minimizes $\mu$. To do so, and fully remove ambiguity in the basis choice,  we recall an important fact: namely, that the difference between the full stabilizer set $\mc S$ and  the basis $\mc B\subseteq \mc S$   is linked to the existence of constraints. So we can think of creating a basis as a process of removing stabilizers: {we need to remove exactly one stabilizer for each independent constraint.} 

 To proceed, we define the following: a Pauli operator, $p$, is said to be {\it cut} by a bipartition $(A,B)$ iff $\supp(p) \cap A \neq \emptyset$ and $\supp(p) \cap B \neq \emptyset$; a set of Pauli operators is cut iff at least one member is cut. Constraints that are not cut by definition contain only elements entirely in $A$ or $B$, and so which stabilizer is removed to form the basis is inconsequential. However, when a constraint is cut, we have the choice to remove either a non-cut stabilizer or a cut stabilizer, leading to some remaining ambiguity in the basis choice. We propose the following rules to remove this ambiguity and implement the minimization which defines recoverable information in Eq. \eqref{eq:fullmudef}.
The rules are simple: when a constraint is cut, we remove a non-cut (cut) stabilizer if the constraint is topological (non-topological). (Note that typically, a non-cut stabilizer removed due to a cut topological constraint will be contained entirely in $B$.)} 
 In practice, {our prescription} means that we {simply compute $d_{\partial A}$ by counting cut stabilizers, with the caveat that we} reduce the number of cut stabilizers by one for every cut non-topological constraint. Note that per this algorithm, when subtracting cut constraints, we ignore {any} topological constraints (effectively assuming obc),
 {since they involve removing only non-cut stabilizers from  $B$}. Since decreasing $d_{\partial A} = |\mc B_\partial|$ decreases $\mu$, this {procedure yields} the minimal value of $\mu$, modulo changes to the boundary conditions.

 We may justify this method of generating the basis as follows.   When the constraint is topological, {the relation between topological constraints and logical Pauli operators tells us that} there must exist a logical operator which contains a piece of the code.  
 The logical bit implied by the topological constraint is  always recoverable via the cleaning lemma, so {always contributes to $\mu$.} {(Note that this is consonant with the fact that a topological constraint will never reduce the count of cut stabilizers in our algorithm above)}. {
 Put differently, we are effectively minimizing the recoverable information  over \textit{all possible {\rm local} bases used to encode the information}, thereby retaining only the non-local information that is recoverable for any local basis choice. The basis connection lemma, proved in Appendix~\ref{appendixproofs}, ensures that this procedure will not get ``stuck'' in local minima and instead will converge on a globally minimizing basis that yields $\mu$.

\subsubsection{A bound on the recoverable information}
As a corollary, we also derive a useful {\bf  bound on $\mu$}, providing another route to characterizing recoverable information. Define the set of {\it cut topological constraints},  $\cut (\T) = \{C\in \T: C \text{ is cut by } (A,B)\}$. Then
\begin{align}
\mu \geq \dim(\cut(\T)).
\end{align}
This is obvious as we specifically did not remove cut stabilizers when the constraints involved were topological. 

\subsubsection{Computing recoverable information from non-local surface stabilizers}
Finally, we can relate recoverable information to a special subgroup of $G$ defined for a bipartition $(A,B)$ and a basis $\mc B$, that we term the {\it non-local surface stabilizer group}, denoted $G_{\text{NLSS}}$. This group captures all of the non-trivial overlap between the groups $G_A, G_B$ and $G_\partial$ and is defined as
\begin{align}
G_{\text{NLSS}}= G_\partial \cap (G_A \times G_B).
\end{align}
It follows from this definition that any element $g_\partial \in G_{\text{NLSS}}$ may be generated in one of two ways:
(i) as a product over some set of cut basis members, $F_\partial \subseteq \mc B_\partial$ and (ii) as a product of some elements $g_A \in G_A$ and $g_B \in G_B$, i.e. $g_\partial = \prod_{s\in F_\partial}s= g_A g_B$. 
We show (via relation  \eqref{eq:NLSSfactor}) that given $G_{\text{NLSS}}$, we can always choose a basis $\mc{B}$ such that}  either $g_A$ or $g_B$, but not both, can be written as a product over members of $\mc B_A (\mc B_B)$. Thus we have a product of stabilizers in $A$ ($B$) and on the boundary which only has support in $B(A)$.

As in the $d=2$ toric code, the elements of $G_{\text{NLSS}}$ are topological in nature, and are directly linked to the Gauss's law-type constraints satisfied by the entanglement cut.
To see this, let us suppose all of $G_A$ is generated by stabilizers which are confined to $A$ (as is usually the case when $A$ is bounded and simply-connected), and assume that we choose a basis $\mc B$ that contains all such stabilizers. Now consider any element $ g_\partial \in G_{\text{NLSS}}$; from our discussion above, there exist elements $g_A\in G_A$ and $g_B \in G_B$ so that $g_\partial = g_A g_B$. By restricting both sides of this expression to its support in $A$ (denoted $(\cdot )_A$) 
we find $(g_\partial)_A = (g_A g_B)_A =(g_A)_A (g_B)_A = g_A$, as $g_B$ is supported only in $B$ by definition.  By our assumptions about $G_A$, $g_A$ is generated by a basis set $F_A \subseteq \mc B_A$, i.e. $g_A = \prod_{s\in F_A} s$, leading to the operator relation
\begin{align} \label{eq:gengl}
\prod_{s\in F_A} s = (g_\partial)_A.
\end{align}
Since each member of $F_A$ is a local stabilizer that squares to the identity, it can be thought of as a $Z_2$ charge. In this sense, \eqref{eq:gengl} expresses the fact that  the number of charges modulo $2$ within some subset of the bulk of a compact subregion $A$ is equal to a measurement on the boundary of $A$ --- a relation akin to Gauss's law. Thus, any NLSS can be mapped to a Gauss's law-type relation (or combination of such relations) satisfied by the entanglement cut. Though we assumed $g_A$ was formed by stabilizers in $A$, %
we can see that if $g_B$ is formed by a set $F_B \subseteq \mc B_B$ instead, an analogous statement holds when the equation is restricted to its support in $B$.

Note that $G_{\text{NLSS}}$ is dependent on the basis (which we say {\it generates} $G_{\text{NLSS}}$); we prove in Appendix \ref{appendixproofs} that 
\begin{align} \label{eq:NLSSeq}
G_{\text{NLSS}} \simeq G_A/ \spn(\mc B_A) \times G_B / \spn(\mc B_B).
\end{align}
Comparing this to Eq. \eqref{eq:mufin}, we see non-local surface stabilizer groups can be related to recoverable information by the following equivalent statements:
\begin{enumerate}[label= (S\arabic*)]
\setcounter{enumi}{0}

\item $\dim( G_{\text{NLSS}}) = \mu$ (see the minimization definition in \eqref{mudefintro} along with Eqs. \eqref{eq:mufin} and \eqref{eq:NLSSeq}), \label{s1}

\item For any element $g_\partial \in G_{\text{NLSS}}$ with its associated $g_A \in G_A$ and $g_B \in G_B$ for which $g_\partial = g_A g_B$, if there exist sets of {stabilizers}, $F_A \subseteq \mc S_A$ and $F_B \subseteq \mc S_B$  %whose product is $g_A$ and 
 whose product is $g_A, g_B$ respectively, then $F_A \cup F_B \cup F_\partial \in \mc T$, where $F_\partial \subseteq \mc B _\partial$ is the basis set which generates $g_\partial$ and \label{s2}

\item For all $g_\partial  \neq I \in G_{\text{NLSS}}$ such that $F_\partial \subseteq \mc B _\partial$ is the basis set which generates $g_\partial$, if there exists a {non-topological} constraint $C \in \mc C -  \mc T$ such that $F_\partial \subseteq C$, then $C_\partial - F_\partial \neq \emptyset$. \label{s3}

\end{enumerate} 
Recall $\mc C$ is the set of all constraints and $\mc T$ is the set of topological constraints as discussed in Sections \ref{sec:const} and \ref{sec:tconst}. See Appendix \ref{appendixproofs} for proof of the equivalence of these statements. \ref{s2} and \ref{s3} can be useful for computing $\mu$ as they make no reference to a generating basis for $G_{\text{NLSS}}$ (outside of $\mc B_\partial$, but generally, this can be replaced by ${\mc S}_\partial$, the set of stabilizers at the boundary). For example, suppose we find some independent set of stabilizer group elements which look like NLSSs (i.e. satisfy a relation of the form $g_\partial = g_A g_B$). However, there is no immediate way by which we can check that these form some $G_{\text{NLSS}}$ outside of finding a basis which generates it, let alone  assess whether such a basis is minimal.
 However, if our set of NLSSs satisfy statements \ref{s2} or \ref{s3}, we know that its dimension is a lower bound for $\mu$. Then we can continue to add independent NLSSs to our candidate set until \ref{s2} and \ref{s3} are violated, at which point we must have had a $G_{\text{NLSS}}$ which is generated by a minimal basis. This is useful in cases where calculating $\mu$ via the definition \eqref{eq:fullmudef} is considerably more difficult than finding possible NLSSs.

Given this result, it is useful to consider what the minimization procedure implies for the emergent Gauss's law relations. Usually, $Z_2$ charges in a stabilizer code can be defined consistently only if we choose a stabilizer basis. However, some choices of basis may result in conservation laws of the type in  \eqref{eq:gengl} that are not universal, i.e. do not appear for other basis choices.  
When the basis $\mc B$ is minimal, any alternative basis $\mc B^\prime$ generates a possibly larger non-local surface stabilizer group, i.e. $G_{\text{NLSS}} \subseteq G_{\text{NLSS}}^\prime$, up to equivalent choices of minimal basis (a more precise statement is discussed in Appendix \ref{appendixproofs}). Thus a minimal basis generates a universal set of Gauss laws.

Note that  both the stabilizer-counting and NLSS counting approaches to determining recoverable information involve counting stabilizers to determine the dimension of an appropriate stabilizer subspace, so that there is an occasional formal similarity between the two in practice. The reader is urged to keep in mind that the two are still quite different --- the former requires additionally the knowledge of entanglement entropy (e.g. by the methods of Ref.~\onlinecite{companion1}) whereas the latter gives the recoverable information directly and is therefore arguably more powerful. 

\section{Computations of Recoverable Information} \label{sec:compute}

\begin{figure}[t]
\centering
\centering
\includegraphics[width=.6\columnwidth]{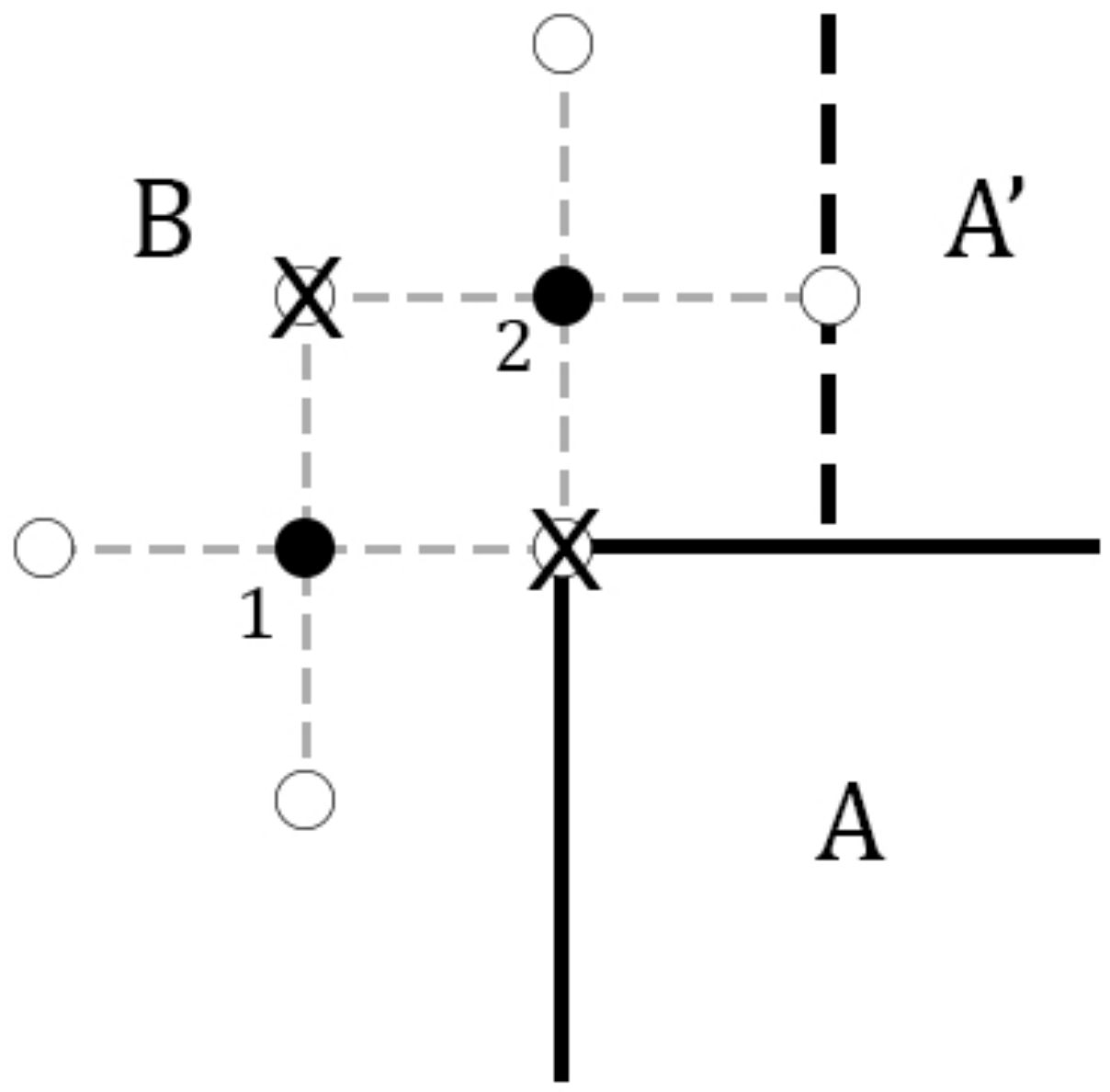}
\caption{Visualization of the `quasi-local NLSS' at the corner of a square entanglement cut. For simplicity, the black circles represent $Z$-type operators and the white circles represent the $X$-type operators.}\label{fig:cluster} 
\end{figure}%

\begin{figure}[t]
\centering
\centering
\includegraphics[width=.6\columnwidth]{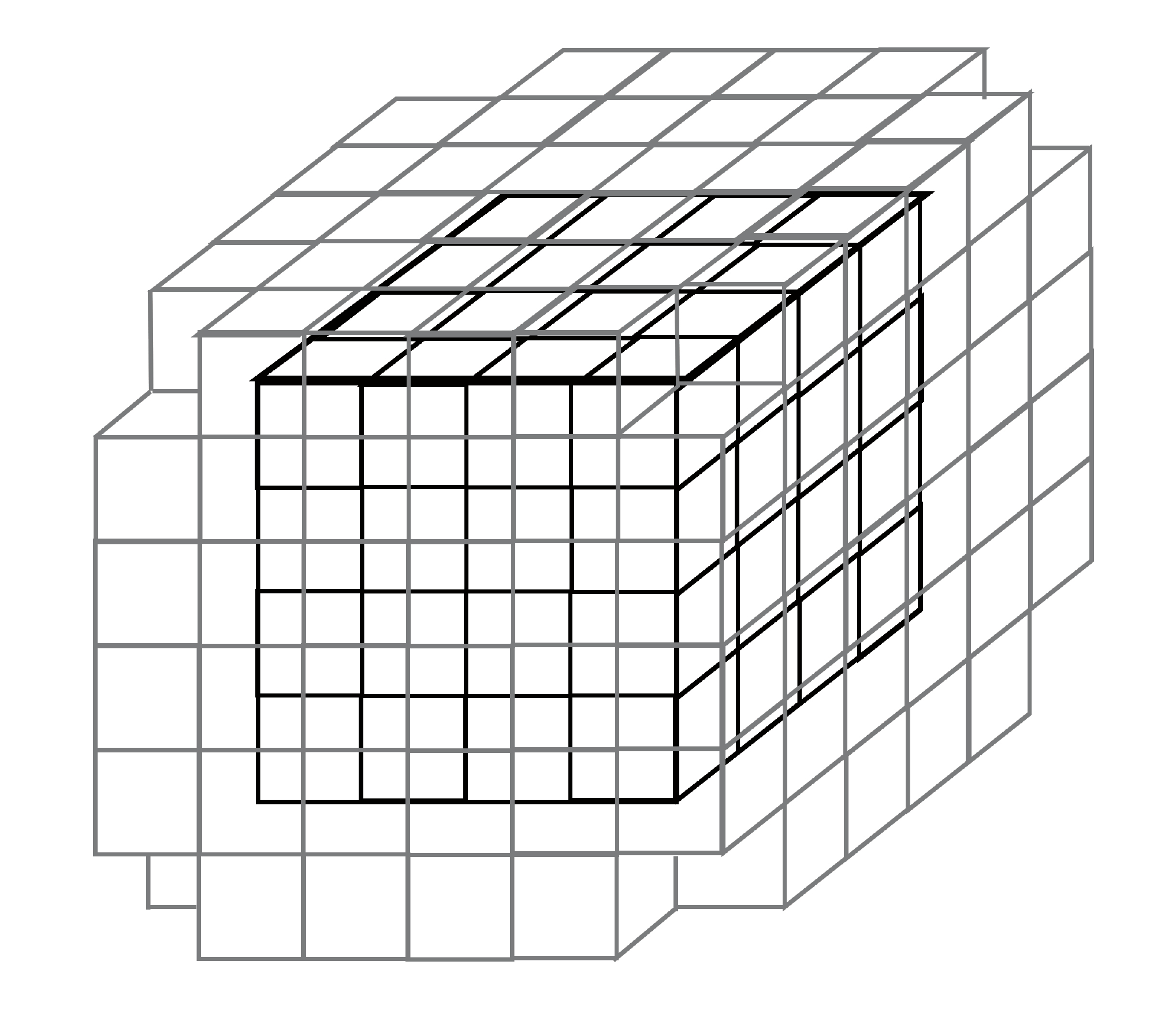}
\caption{Depiction of the constraint in the $d=3$ toric code which is formed by the product (sum in $W[\mc S]$) of all cut local cube constraints, but is itself not a cut constraint.}\label{fig:dented} 
\end{figure}%

We now apply the techniques of the preceding section to {four} different stabilizer codes introduced in Sec.~\ref{sec:models} : {the  cluster model in any dimension,} the $d=3$ toric code, the X-Cube model and Haah's code. Though first two are not fracton models, they demonstrate some of the necessary considerations involved in the minimization process used to define and compute the recoverable information. {The cluster models also serve  to illustrate the expected behavior for the recoverable information in SPT phases.} Note that we often identify elements of $G_{\text{NLSS}}$ by 
{ the element $g_B \in G_B$ for which there exist $g_A \in G_A$ and $g_\partial \in G_\partial$ such that $g_A g_B = g_\partial \in G_{\text{NLSS}}$. We do so as elements in $B$ are more intuitively thought of as NLSSs.} 

\subsection{Cluster model in arbitrary dimension $d$}\label{sec:computeclu}
{
We return to the cluster model, starting with $d=2$ and then arguing the same results hold in all other dimensions. We first take subsystem $A$ to be an  $R \times R$ square of vertices. The entropy is simply $S_A =4R-4$,  as there are $R^2$ qubits and $(R-1)^2$ complete stabilizers in $A$ (one for each internal vertex not along the edge). Turning to the cut stabilizers, note that every vertex on the boundary of $A$ represents one cut stabilizer which contributes $4(R-1)$ cut stabilizers. Similarly, almost every vertex in $B$ and on the boundary with $A$ is cut, {\it except} for the four vertex stabilizers at the corners; so vertices in $B$ contribute $4R$ to the count of cut stabilizers. Therefore, the recoverable information is
\begin{align}
\mu_{\text{clu}_2} (\text{square}) = 4R-4 + 4R - 2 (4R-4) = 4.
\end{align}
This may seem counterintuitive, as one would expect that the  recoverable information vanish, in accord with the fact that the cluster model does not exhibit topological order.} 
The extra bits of recoverable information can be attributed to the sharp corners of the entanglement cut. At a corner, we technically have a NLSS (though the name is not particularly apt in this case) coming from the product of two stabilizers oriented diagonally across a lattice plaquette and adjacent to the corner. Though each of these are cut by the corner, their product exactly removes support from that corner and as such is an NLSS (a product of cut stabilizers which is only supported in $B$; $g_A= I$ in this case). Such a situation is depicted in Fig. \ref{fig:cluster} for regions $A$ and $B$. However, if we choose the subregion to be $A\cup A'$ from Fig. \ref{fig:cluster}, stabilizer $2$ is now additionally cut along the boundary of $A'$. Then the product of stabilizers $1$ and $2$ is no longer entirely supported in $B$ and thus does not contribute to the recoverable information.  So instead of a square entanglement cut, if we choose a sufficiently `smooth' cut (i.e. one such that no corners are formed such as that shown in Fig. \ref{fig:cluster} for cut $A$), then we have zero recoverable information. This is analogous to topological entanglement entropy, where one only considers the universal constant piece of entanglement entropy for smooth boundaries. It should also be clear that this behavior is general in any dimension, where these `quasi-local NLSSs' appear in the recoverable information for sharp $d-2$ edges. For example in $d=3$, Fig.~\ref{fig:cluster} can be viewed as a 2D slice of a 3D boundary where the cluster stabilizers extend into and out of the page. So we conclude that in any dimensions $d$ and for a sufficiently smooth boundary, the recoverable information is 
\begin{align}
\mu_{\text{clu}_d}(\text{smooth}) = 0.
\end{align}
This result is the first demonstration of the power of independent NLSS counting; we can make very general statements without explicitly computing the entropy or the number of cut stabilizers. Furthermore, the fact that recoverable information is sensitive to sharp corners can be useful in certain cases --- e.g., the existence of corner entanglement can be indicate important universal features for certain models\cite{Bueno2015}  (also see discussion in Sec.~\ref{sec:conn}).

\subsection{$d=3$ toric code}\label{RIforTC3D}

\begin{figure}[t]
\centering
\centering
\includegraphics[width=.9\columnwidth]{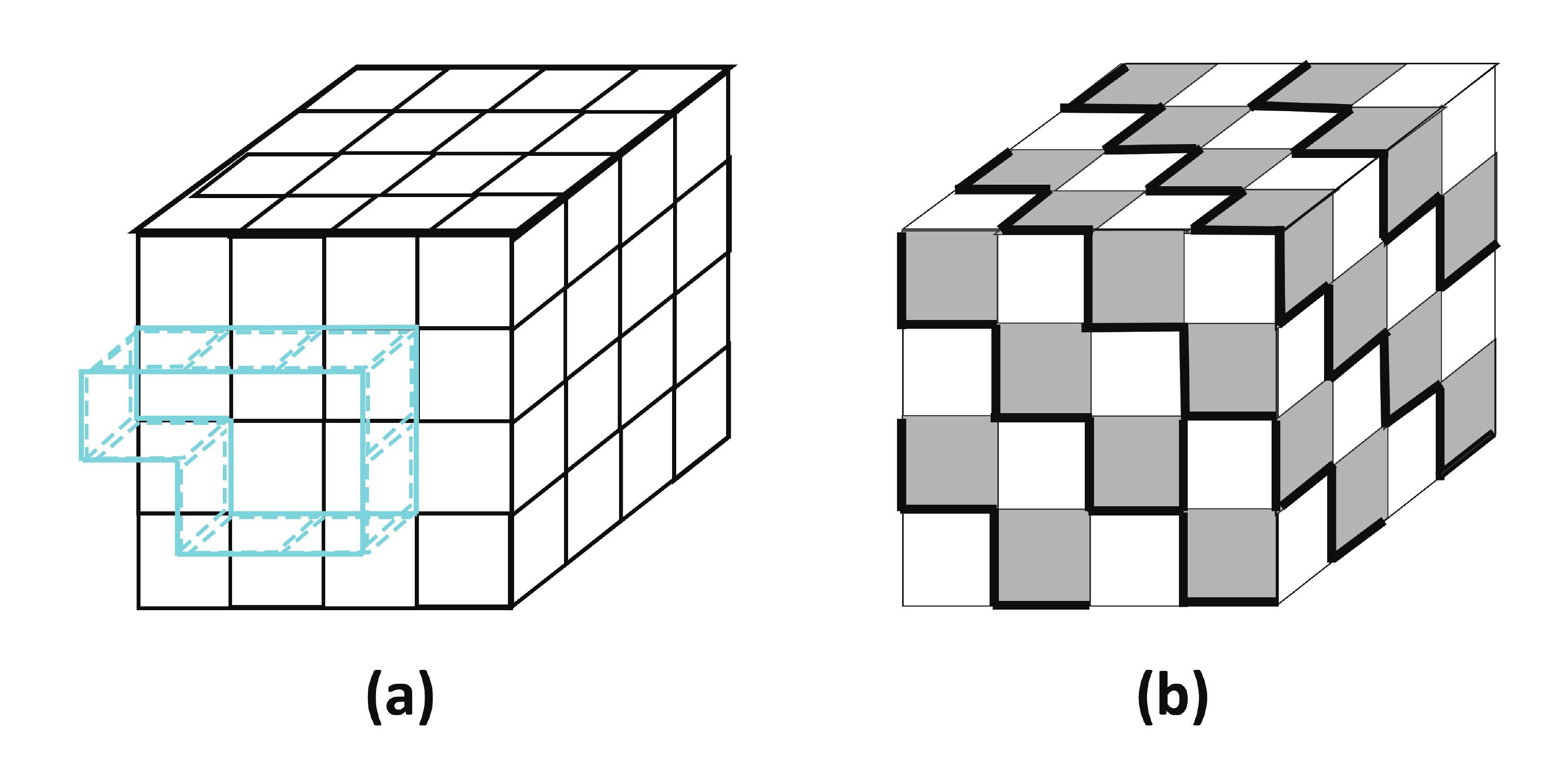}
\caption{Construction of an appropriate basis for the cut stabilizer group in the $d=3$ toric code. (a) Depiction of the straddling surface ribbons a minimal basis must avoid generating in $G_\partial$. (b) Basis construction using the checkerboard method as described in the text. Here, the thick lines indicate the intersection of a cut plaquette stabilizer in the basis with $A$.}\label{fig:3dtoricchecker} 
\end{figure}%

The $d=3$ toric code provides an example where counting cut stabilizers requires some caution to properly 
{account for} local constraints. We will take sub-region $A$ to be an $R\times R \times R$ cube (as measured by edge length on the lattice) with smooth surfaces. {Qubits on the entanglement surface are taken to lie `inside' the entanglement cut}. As in the $d=2$ toric code, we have the constraint that the product of all vertex stabilizers in the system is the identity. However, there is an extensive set of independent {\it local} constraints: namely, the set of all plaquette stabilizers around any cube (as their product is the identity). The latter reflects a redundancy in the information contained in the stabilizers: na\"ively, it seems as though the plaquette stabilizers on a single cube encode 6 bits of information, but thanks to this constraint, only 5 of these are independent. If such a cube is cut, Alice and Bob each receive one of two opposite faces, and between them they have two bits of information. The remaining $3$ bits are shared between the four cut stabilizers, and so if we simply counted cut stabilizers, we would {\it overestimate} the information loss when Alice and Bob are in the `no communication' state. This motivates the caveat on the definition of cut stabilizers and the cut stabilizer group as discussed in Section~\ref{sec:relation_recover}. 

We first compute $\mu$ by the cut-stabilizer-counting approach. According to the prescription given in Section~\ref{sec:relation_recover}, we can minimize the number of cut stabilizers by counting all cut stabilizers and subtracting the number of cut non-topological constraints. There are $6R^2+2$ cut vertex stabilizers (with no local constraints among these). There are $12 R(R+1)$ cut plaquette stabilizers, but there are $6R^2 + 12 R$ cut local {cubic} constraints involving plaquettes, 
one for each cube with an edge intersecting $A$.
Naively, subtracting these gives $6R^2$ for the contribution of plaquette stabilizers to $\mu$. However, under the map $f$ (defined in Sec. \ref{sec:const})  the product of all the cubic  constraints (alternatively understood as the sum of constraints  in $W[\mc S]$) is mapped to the product of all plaquettes in the boundary and all those in the layer of plaquettes just outside the boundary and parallel to it (excepting the three plaquettes that close each corner of the outer cube; see Fig. \ref{fig:dented}), and is therefore 
{\it not} cut. That is, there is no cut stabilizer to remove from this enveloping constraint in accordance with the minimization and cannot be subtracted from the total number of cut stabilizers.
  Note that this example neatly illustrates the importance of counting constraints with the group structure imposed (i.e., inside $W[\mc S]$) rather than simply taking unions of sets. Therefore, we find in effect that, for any choice of $\mc B_\partial$, there are $6R^2+1$ cut plaquette stabilizers, leading to $\min_{\mc B} [d_{\partial A}(\mc B)] = 6R^2+2 + 6R^2+1 = 12R^2+3$.

An alternative method to obtain this result is to recognize that no matter what basis we choose, it satisfies our requirements only if no set of cut plaquettes can be combined to form a closed ribbon operator whose parallel edges straddle the entanglement cut (i.e forms parallel strings in $A$ and $B$; see  Fig. \ref{fig:3dtoricchecker}(a)).These ribbons represent a cut local constraint {when one includes  the `caps' of the ribbon: the plaquettes on the boundary of $A$ encircled by the ribbon, and their parallel counterpart in $B$.}
 An algorithm for constructing a basis {free of such straddling ribbon operators} is as follows: first, we label the squares on the surface (not to be confused with plaquette stabilizers) in a ``checkerboard'' fashion. Then, we only include the cut plaquettes that intersect the ``black'' squares on two of the four edges, say the right and bottom. On the surface, this looks like several lines that stair-step diagonally across a face of the cube, as shown in Fig.~\ref{fig:3dtoricchecker}(b). At the intersection of two planes of the entanglement surface, we only include the plaquette perpendicular to the surface containing the black square. Careful consideration at those intersections shows that this does not allow closed ribbon loops. This implies that exactly half the edges on the boundary support a cut stabilizer in the basis, but we can add exactly one additional plaquette anywhere without creating a closed ribbon loop. Counting this way, we once again have $6R^2+1$ cut plaquettes and $6R^2+2$ cut vertex stabilizers (as the vertices satisfy no local constraints).  A simple counting argument using Eq. \eqref{eq:SAcount} yields $S_A = S_B=6R^2+1$, whence we find, using the definition \eqref{eq:fullmudef}, that 
\begin{align}
\mu_{\text{TC}_3}=1. 
\end{align}
 We can also compute the result in terms of the lower bound on $\mu$ 
as given by cut topological constraints. The set of all vertex stabilizers in the system satisfies a single topological constraint with pbcs, that is cut by the presence of $A$, leading to the bound $\mu\geq 1$, which is clearly saturated in this example.

  Finally, this also agrees with the counting in terms of non-local surface stabilizers: there is precisely one {NLSS}, namely the non-contractible membrane operator wrapping the cut  (product of all vertex stabilizers in $A\cup \partial A$).
There are no other NLSSs left after local constraints are implemented. In particular, to convince ourselves that no string operator can be in a minimal $G_{\text{NLSS}}$, the provided construction explicitly forbids creating cut ribbons which when combined with some element of $G_A$ would create a string. {One can readily check that such strings violate statements \ref{s2} and \ref{s3} and thus any basis which admits them cannot be minimal.}

 Note that the definition of $\mu$ also clearly satisfies our criterion of identifying the number of Gauss law-type conditions satisfied by the emergent degrees of freedom: it is well-known that the $d=3$ toric code can be identified with a $Z_2$ gauge theory, with a single such emergent conservation law for point-like particles (i.e. excitations with zero-dimensional support). The $Z_2$ magnetic fluxes in the 3D toric code form closed loops, rather than point-like particles as in $d=2$. The corresponding conservation law is linked to a closed-membrane rather than a closed-loop constraint of the Gauss-law type, and hence does not contribute to the recoverable information for simply connected entanglement cuts. 

\subsection{X-cube model}
\label{sec:xcube}

\begin{figure}[t]
\centering
\centering
\includegraphics[width=.6\columnwidth]{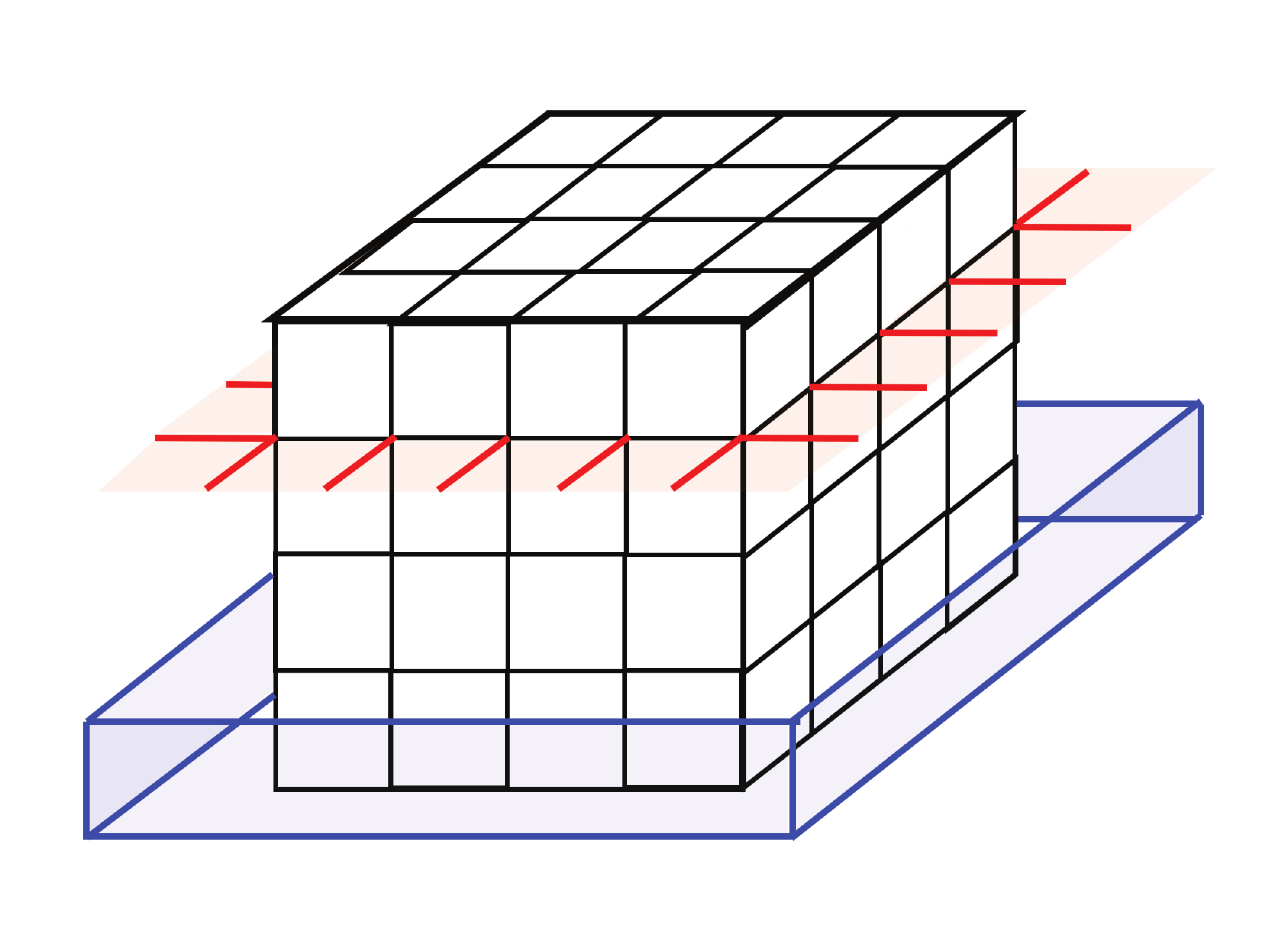}
\caption{Depiction of the NLSS ribbon operators for the X-cube model. The red operator is the $Z$-type ribbon operator generated from $A$-type stabilizers and the blue operator is the $X$-type ribbon operator generated from $B$-type operators.}\label{fig:Xribbon}  
\end{figure}%

We consider next the $X$-cube model, with the same choice of $A$ as for the $d=3$ toric code. As in that example,  
we have an extensive number of local (non-topological) constraints, namely the
set of all three $A$-type stabilizers for every vertex (since their product is the identity). However, these constraints are more easily handled as compared to  the $d=3$ toric code, as follows.
Clearly, when the six edges that coordinate a vertex are split between $A$ and $B$, at least two of the three associated vertex stabilizers are cut. When only two are cut, the minimization prescription dictates that one is removed from the basis, leaving one cut stabilizer; when all three are cut, we have no choice but remove one from the basis, leaving two cut stabilizers. 
By this counting, there are $6(R-1)^2$ cut $A$ type stabilizers on the faces (one for each vertex), $24(R-1)$ on the edges and 16 at the corners (two for each vertex). As for the $B$-type stabilizers, all the constraints are topological and every cube which intersects $A$ represents a cut $B$-type stabilizer. There are $6R^2$ intersecting cubes on the faces and $12R$ along the edges. Note that the  cubes that touch $A$ at its corners do not share an edge with it and are hence are not cut. So we find that $d_{\partial A} =12 R^2 + 24R-2$. Again, {a  counting argument using  Eq. \eqref{eq:SAcount} gives $S_A =S_B = 6R^2 +9R-4$, whence we find that 
\begin{align}
\mu_{\text{XC}} = 6(R+1).
\end{align}
}

The result is consistent with the number of cut topological constraints characterized by 
\bea
\prod_{v \in P_k} A_v^{(k)} &=& I \text{ and }\\
\prod_{c @ P_k} B_c &=& I,
\eea
where $P_k$ is the set of all vertices in a given plane perpendicular to the direction $k$, and we use the adjacency symbol (`$@$') to denote that the cube $c$ contains vertices in $P_k$. The constraints on the cubes are written in an apparently redundant manner since cubes are adjacent both above and below a given plane of vertices, and each plane of cubes (corresponding to a  dual lattice plane) contains an independent constraint. However, the constraints given above are still independent and are natural in terms of fracton excitation, in that this represents the subsystem in which bound pairs are mobile\cite{Prem2017}.
As each of these constraints is characterized by a plane and the constraint is cut if the plane is cut, we can see that the number of cut topological constraints is $6(R+1)$. 

Finally, we  can also interpret these results in terms of a natural basis for $G_{\text{NLSS}}$. From the $A$-type stabilizers, we can create non-contractible 
$Z$-type ribbon  
loops confined to a single layer of vertices (all vertex stabilizers entirely supported in a given layer of $A \cup \partial A$; 
 see Fig. \ref{fig:Xribbon}) and there are $3 (R+1)$ such operators. Meanwhile from the $B$-type stabilizers we can construct non-contractible X-type ribbon operators which are confined to a single plane of cubes (the product of all cube stabilizers in a layer of $A \cup \partial A$; see Fig. \ref{fig:Xribbon}). Note that these ribbons are supported on the joints where the ribbon changes direction. There are $3R$ such ribbons near the surface, but also $6$ additional ribbons which straddle the outer edge of any plane of the cut.  However, not all the operators we have identified above are truly independent. For the $Z$-type ribbons, consider multiplying all of them together. This is equivalent to multiplying all vertex stabilizers in $A$ and on the boundary which, put a different way, is the product of the three star stabilizers at every vertex in or on the surface of $A$. By the local vertex constraints, this is the identity, and we lose one NLSS from the basis. Furthermore, if we multiply all $X$-type ribbons wrapping around the cube in any one direction, we remove support where the ribbons overlap, but not on the outer edges or joints of the ribbons. Thus this operator is identical to having taken the same product in any other direction. This gives two additional constraint and we lose two more NLSSs from the basis. Altogether, there are $6(R+1)$ NLSSs in the basis after we implement the constraints, consistent with our earlier calculations for recoverable information.

The Gauss's law interpretation follows immediately from this basis for $G_{\text{NLSS}}$. Each $Z$-type ribbon yields a Gauss's law akin to that found in Eq. \eqref{eq:toricgl2} for the $d=2$ toric code, one for each of the $3R+3$ layers. Likewise, each $X$-type ribbon has an associated Gauss's law that again looks similar to the magnetic sector of the $d=2$ toric code, but within one of the $3R+9$ layers of cubes. However, while all of these Gauss law constraints are true, only $6(R+1)$ are independent in the following sense: given $6(R+1)$ of these conservation law statements, the other three can be deduced via direct correspondence with the discussion of constraints for $G_{\text{NLSS}}$.

\subsection{Haah's Code}
\label{subsec:recoverHaah}
We now turn to Haah's code, where we consider as sub-region $A$ a cube consisting of $R$ {\it vertices} on a side, always including both qubits at a vertex. Since the number of constraints is a complicated function of the boundary conditions, we conjecture that all of them are topological. Therefore, we need not be concerned about eliminating any of the cut stabilizers from ${\mc B}_\partial$. The number of cut stabilizers is therefore simply given by $d_{\partial A} = 2N_{\text{cubes}}(\partial A) -2$, where $N_{\text{cubes}}(\partial A)$ is the number of cubes wrapping the  entanglement cut (each of which has two types of stabilizers), and the subtraction accounts for the fact that exactly one of each type of stabilizer intersects $A$ only at its double identity corner. So, although the qubits on that corner are in $A$, the stabilizer formed in that cube by definition has no support in $A$ and therefore should not be counted as cut. To find $N_{\text{cubes}}(\partial A)$, we simply count the difference between the number of elementary cubes contained in an $(R+2) \times (R+2) \times (R+2)$ volume, less those in the $R\times R\times R$ interior volume. As these are $(R\pm 1)^3$, respectively, we find that $d_{\partial A} =2[(R+1)^3 -(R-1)^3] -2 = 12 R^2 +2$. Using Eq. \eqref{eq:SAcount}, we obtain $S_A = S_B =6R^2- 6R +2$ whence we find that 
\begin{align}
\mu_{\text{Haah}}= 12R -2.
\end{align} 

Before proceeding to compare this to the other methods for calculating recoverable information, we caution that there are some subtleties that need to be taken into account.
As mentioned above, the ground state degeneracy of Haah's code and hence the number of topological constraints is a complicated function of $L$:  while a certain sub-sequence of lattice sizes $L$ has a degeneracy exponential in $L$, there is a different sub-sequence of sizes where the degeneracy is independent of $L$.  Since the size of the entanglement cut $R$ can be any suitable fraction of the total system size $L$, the recoverable information can exceed the number of topological constraints. So not only is it difficult to count cut constraints, the bound they provide for recoverable information is far from tight. This emphasizes the fact that recoverable information is locally measurable and thus insensitive to the boundary conditions which dictate the number of global constraints and the ground state degeneracy.
 This distinction is hidden in more conventional topological orders where the local information in the entanglement is more directly linked to the global constraint and ground-state degeneracy properties.

We now wish to associate the recoverable information in Haah's code with a basis
{ for $G_{\text{NLSS}}$}; this is challenging owing to the rather complicated structure of the stabilizers in Haah's code. The counting is most easily achieved by means of the polynomial ring formalism described, e.g., in Ref.~\onlinecite{Vijay2015}. The full power of the polynomial ring formalism requires results in commutative algebra that are standard but possibly unfamiliar to our readers; however, we only require some basic ideas of this formalism, that we reproduce here for convenience.

The polynomial ring formalism provides a representation of translationally-invariant Hamiltonians built from products of Pauli operators in terms of `stabilizer maps', matrices whose entries are drawn from the Laurent polynomial ring  $\mathbb{F}_2(x,y,z)$ over the finite field on two elements (In simple terms, this is the set of Laurent polynomials in three variables with coefficients in the set $\{0,1\}$ where addition and multiplication are both taken modulo $2$).  As we describe below, this is merely an alternative representation of $V[P]$ and $V[G]$ which exploits polynomial algebra to represent translation symmetry.  Let us consider a model with a  {$q$}-site basis, i.e.  {$q$} qubits per unit cell: for example,  {$q =3$} for the X-cube model, while  {$q=2$} for Haah's code. The left dimension (number of rows) of the stabilizer map labels each of the distinct Pauli operators for each qubit in a unit cell.
As in our discussion of the vector space $V[P]$, we can represent $Y$-type operators by the product of the $X$- and $Z$-type operators, $Y =  iXZ$, so that the stabilizer map has  $2q$ rows. For example, we may take the first  {$q$} rows to correspond to $X$ on each basis site, and the next $q$ to $Z$. The right dimension  (number of columns), $m$ of the stabilizer map corresponds to the independent stabilizers associated to each unit cell. 
For instance, $m =3$ for the X-cube model {as each unit cell includes a single cubic stabilizer and two independent vertex stabilizers (since $A_v^{(xz)} = A_v^{(xy)}\cdot A_v^{(yz)} $). For Haah's code, $m=2$, corresponding to the two stabilizer types depicted in Fig. \ref{fig:Haah_code}. A stabilizer map is therefore a  {$2q \times m$} matrix.

The entries of this matrix elegantly capture the structure of the stabilizers, as follows. Each stabilizer can be associated to a given `origin' unit cell and involves some set of Pauli operators, $X$ or $Z$ acting on the $q$ qubits in some set of (usually adjacent) unit cells, linked by translation from the origin cell. Monomials in $d$ variables provide a convenient representation of the $d$-dimensional translation group: for instance, a lattice point $\mathbf{r} = a \hat{\mathbf{x}}+ b\hat{\mathbf{y}} +c\hat{\mathbf{z}}$ is maps to the monomial $x^a y^b z^c$. We may thus associate {a monomial in the matrix with the position (as measured from the origin cell) of a single qubit Pauli operator of a given type  ($X$ or $Z$) and index $q$ (that together specify the row within the matrix)  which makes up a given stabilizer type (that fixes the column). As in the case of $V[P]$, multiplication of Pauli operators which make up a stabilizer is represented by summation over the appropriate monomials. }
Translations of a stabilizer from the origin to an arbitrary point $(e.g., (a,b,c))$ on the lattice is then given by multiplication of the corresponding column by an appropriate monomial ($x^ay^bz^c$) which represents that position. 
{ Thus, by using arbitrary sums (which represent products) over individual monomials (which represent translations), we can generate all elements of the stabilizer group  by multiplying the resulting polynomial by a column of the stabilizer map and using distributivity. } 

In order to illustrate this and set the stage for our next calculation, let us apply this approach to Haah's code, where as we have noted {$q=m=2$}, so that the stabilizer map is a $4\times 2$ matrix. Let us focus on the $X$-type stabilizers $G_c^X$, corresponding to the first column of the stabilizer map. As such a stabilizer only involves the $X$-type Paulis, only the first two rows will be non-zero. Now, for a given site $\mathbf{r}$, $G_c^X$ involves the `right' qubit at the four sites $\{\mathbf{r}, \mathbf{r} +\hat{\mathbf{x}}, \mathbf{r} +\hat{\mathbf{y}}, \mathbf{r} +\hat{\mathbf{z}}\}$ and so the first row of the first column is the polynomial $1+x+y+z$, since we may, without loss of generality, choose the origin at $\mathbf{r}$. Similarly, $G_c^X$ involves the `left' cubit at $\{\mathbf{r}, \mathbf{r} +\hat{\mathbf{x}}+\hat{\mathbf{y}}, \mathbf{r} +\hat{\mathbf{y}}+\hat{\mathbf{z}}, \mathbf{r} +\hat{\mathbf{z}}+\hat{\mathbf{x}}\}$, so that the second row of the first column corresponds to $1+xy+yz+zx$. Proceeding analogously for $G_c^Z$, we find the stabilizer map for Haah's code,
\bea
S_{\text{Haah}} = \left( \begin{array}{cc} 	1 + x+y+z 		& 0 \\
								1+xy + yz+zx 	& 0 \\
									0 		& 1+\overline{xy} + \overline{yz}+\overline{zx} \\
									0 		& 1 + \overline{x}+ \overline{y} + \overline{z}
	 \end{array} \right),\nonumber\\
\eea
where $\overline{x}\equiv x^{-1}$, and we take the `origin' at the XX (ZZ) site (Fig.~\ref{fig:Haah_code}) for $G_c^X$ ($G_c^Z$).

This formalism allows us to count stabilizers with relative ease, by reducing the problem to one of counting independent polynomials. We focus on just the $X$-type stabilizers; the arguments carry over, {\it mutatis mutandis} to the $Z$-type ones. We define the pair of polynomials 
\be\label{eq:alphabetadef}
\alpha = 1 + x+y+z,\text{  }  \beta = 1+xy + yz+zx.
\ee
All other $X$-type stabilizers are obtained by simultaneous translations of these two polynomials, i.e. by multiplication by a monomial in the ring:  
 \be
\left(\begin{array}{c} \alpha\\ \beta\end{array}\right) \rightarrow x^ay^bz^c \left(\begin{array}{c} \alpha\\ \beta\end{array}\right).
\ee
 An arbitrary product of $X$-type stabilizers is then associated to an element of the polynomial ring via $g = P \cdot \left(\begin{array}{cc}\alpha &\beta\end{array}\right)^\top$, where $P\in \mathbb{F}_2(x,y,z)$. For instance, the product of $X$-type stabilizers located at $3\hat{\mathbf{x}} + 2 \hat{\mathbf{y}}$ and $\hat{\mathbf{y}} + 4 \hat{\mathbf{z}}$ is obtained by choosing $P = x^3y^2 + yz^4$. Similar relations hold for $Z$-type stabilizers, so that the entire stabilizer group is represented by elements of the polynomial ring.

For the cubic entanglement cut, we may represent every stabilizer group element contained in $A \cup \partial A $ as follows. First, take the origin (corresponding to $1$ in the ring) to be the lowest corner of the $(R+2)\times (R+2) \times (R+2) $ cube surrounding $A$; then, the elements correspond to those polynomials where any single variable has degree $\leq R+1$. 
We use this construction to find a dimension for $G_{\text{NLSS}}$ by noting that all elements of $\spn(\mc B_A) \times G_\partial$ (of $X$-type) can be generated by local stabilizers and thus are represented by the polynomial description above (see the Appendix of Ref.~\onlinecite{companion1} for a rigorous proof of this). We then impose the condition that such elements are also in $G_B$ (i.e., only supported in $B$) and count the number of independent solutions, thereby computing the dimension of $G_B \cap (\spn(\mc B_A) \times G_\partial)$. We follow this by arguing that $G_A \cap(\spn(\mc B_B) \times G_\partial)$ is trivial (i.e., our equations have no solutions), which completes our calculation of the dimension of the $X$-type stabilizers within $G_{\text{NLSS}}$ {via the relation \eqref{eq:NLSSfactor}} . Then, we simply draw on the inversion symmetry $\mathbf{r} \leftrightarrow -\mathbf{r}$ of Haah's code under $Z\leftrightarrow X$, while exchange the two qubits on the same site, to conclude that the answer for $Z$-type stabilizers is identical, so that the dimension of $G_{\text{NLSS}}$ is simply twice that of the $X$-type stabilizers alone.

First, if we write 
\be
P(x, y,z) = \sum_{\vec{a}} p(a_1, a_2, a_3) x^{a_1} y^{a_2} z^{a_3},
\ee
 where $p(a_1, a_2, a_3) \in \{0,1\}$ and the sum is over all integer vectors $\vec{a}$,  then we see that an arbitrary product of $X$-type stabilizers has the form
\be\label{eq:PolyRingexpansion}
P \cdot \left(\begin{array}{cc}\alpha &\beta\end{array}\right)^\top = \sum_{\vec{a} } \left(\begin{array}{cc}p_\alpha(\vec{a}) & p_\beta(\vec{a})\end{array}\right)^\top  x^{a_1} y^{a_2} z^{a_3},
\ee
where $p_\alpha(\vec{a}), p_\beta(\vec{a})$ follow from \eqref{eq:alphabetadef} by straightforward algebra.  Since the stabilizer is assumed to belong to $A \cup \partial A $, we assume that it has degree $\leq R+1$. If we want it to be supported entirely within $B$, we want no terms of degree $\leq R$ in {\it any} variable: in other words, for any term $\vec{a}$ in \eqref{eq:PolyRingexpansion} with $0< a_i \leq R$, we must have $p_\alpha(\vec{a}) = p_\beta(\vec{a})=0$. Using the form of $p_{\alpha, \beta}$ and the fact that $-p= p$ for $p\in \mathbb{F}_2$, we find a pair of conditions on the coefficients of $P$ ({one for each set of qubits}), when $0< {a}_i \leq R$,
\bea
p(\vec{a}) &=&   p(a_1-1, a_2, a_3)  +p(a_1, a_2-1, a_3) \nonumber\\ & & +p(a_1, a_2, a_3-1); \label{recur1}\\
p(\vec{a}) &=&  p(a_1-1, a_2-1, a_3) +p(a_1, a_2-1, a_3-1) \nonumber \\ & &  +p(a_1-1, a_2, a_3-1).\label{recur2}
\eea
Each of these relations is for one qubit at a given vertex, $\vec{a}$. As there are two equations for each vertex, the system appears to be over-determined. However, if we apply \eqref{recur2} to all terms on the RHS of \eqref{recur1} and vice-versa, we get the same equation, so that at a given vertex, the equations are redundant so long as $A$ includes the neighboring vertices on the three faces of the elementary cube bounding the negative octant. See Fig. \ref{fig:haahrecur} (a) for a visual representation of this. These {relations} severely limit the degrees of freedom which may be used to build non-local {surface} stabilizers. All vertices in the `positive octant' relative to the origin are fully fixed by the
{relations}, and thus not available. This potentially leaves vertices in the three boundary planes containing the origin. However even here, every vertex in the outer surfaces of $A$ adjacent to these planes no longer has the required neighbors to make \eqref{recur1} and \eqref{recur2} equivalent. To visualize this, picture the structure in Fig. \ref{fig:haahrecur}(a) ``puncturing'' the surface of $A$ so as to dangle into $B$. Whenever this happens, we gain one additional  {relation} and so to satisfy all equations, we must fix a coefficient in the boundary planes for every puncture. However, because the boundary planes `wrapping' $A$ have larger surface area than $A$, this leaves a few degrees of freedom left over, out of which we can build NLSSs. This can be seen in Fig, \ref{fig:haahrecur}(b) which depicts a 2D slice and the relevant punctures. The number of vertices in the boundary planes containing the origin is $3R^2 + 3R +1$, where the $3R^2$ comes from the faces, the $3R$ from the edges, and the $+1$ from the point at the origin. Meanwhile, the number of vertices fixed by the  
 adjacent surface layer of $A$
is $3 (R-1)^2 + 3(R-1) + 1$. This leaves $6R$ vertices that are not fixed by Eqs. \eqref{recur1} and \eqref{recur2}.
However, the origin corner does not count as a true boundary degree of freedom (because the stabilizer here contains a double identity, as discussed before). {Generalizing this argument, we realize that at the intersection of boundary planes (not at a corner), there are $3$ degrees of freedom to every one additional relation, thus giving 2 NLSSs per vertex in that intersection. Had we done the same analysis using stabilizers in $B$ and on the boundary, we would find that this is reversed: there are 3 relations for each degree of freedom. This implies that such a system is over determined, yielding no solutions, so we conclude that $G_A \cap (\spn(\mc B_B) \times G_\partial)$ is trivial. }
Therefore, including the $Z$-type stabilizers (by doubling the answer above) we find that the basis for $G_{\text{NLSS}}$ has $12R -2$ elements in agreement with our earlier results for recoverable information.

\begin{figure}[t]
\centering
\centering
\includegraphics[width=1\columnwidth]{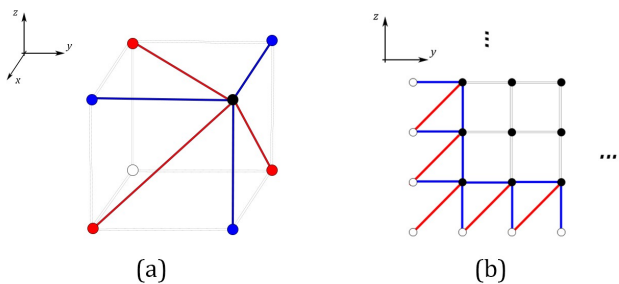}
\caption{Visual representation of Eqs. \protect \eqref{recur1} in blue and \protect \eqref{recur2} in red. (a) Representation for any vertex in $A$ (black vertex). (b) Representation of equations in a 2D slice near the intersection of two boundary planes (white vertices).}\label{fig:haahrecur}
\end{figure}%

We note  that the process of finding NLSSs is similar to that of finding constraints: we are looking for solutions which remove support from some subset of the space. 
While the topological degeneracy of Haah's code is a complicated function of system size, and admits of a simple answer only for certain system sizes, the largest topological degeneracy that is currently known to occur is  $\sim 4L$, for certain $L$. Our results suggest that $4L$ is not maximal, and that a degeneracy of $12L-2$ may be possible for certain boundary conditions, potentially including unequal lengths in each dimension. Using methods similar to those described above, it would be worth searching for the possible sequence of boundary conditions which saturate this upper bound. Such a sequence could be useful from the perspective of quantum error correcting codes as this would be a much denser coding than other models such as X-Cube or stacked toric codes.

We close this section by noting that while the Gauss's law picture for Haah's code is not presently clear, it would be an interesting topic for future work.

\section{\label{sec:conn} Discussion: Connections Between Recoverable Information, Non-local Entanglement, and Topological Properties}
Now that we have calculated recoverable information, we briefly analyze  the advantages and disadvantages of recoverable information over the non-local entanglement (NLE), given by multi-cut constructions such as those presented in \onlinecite{KitaevPreskill, LevinWen, GroverTurner, CastelnovoChamon2008, companion1}. The clear disadvantage is that we are restricted to stabilizer codes. However, when  a stabilizer framework exists, there is real power in using recoverable information. The first and most important is the ability to extract the structure of  Gauss-like laws associated to the boundaries of closed subsystems: every bit of recoverable information can be linked to an important piece of information about the excitations of the system. In contrast, as pointed out in Ref.~\onlinecite{companion1} (in particular, the section on Haah's code), NLE calculations indicate long-range entanglement and can possibly distinguish different phases, but there is no universal interpretation of the exact value in $d>3$ and the interpretation is quite complicated in $d=3$\cite{wen2017entanglement}. Additionally, the Gauss laws accessed via the recoverable information can further be interpreted as braiding or braiding-like rules for the excitations (e.g., Eq. \eqref{eq:toricgl2} or its generalization Eq. \eqref{eq:gengl}). The boundary operator can be broken into pieces, namely the cut stabilizers, each of which must generate excitations (or else they would have been in either the stabilizer or logical group). Thus applying these operators sequentially is  akin to braiding. For example, in the $d=2$ toric code, $(A_v)_A$ along the boundary is an operator which hops an electric particle. So applying the full boundary operator, $(g_\partial)_A$, is equivalent to creating a pair of electric excitations and threading one around the boundary to annihilate with the other. Then by the Gauss law, Eq. \ref{eq:toricgl2}, we know that if this process is sensitive to the parity of the number $N_m$ of magnetic excitations it encircles: the resulting phase is $-1^{N_m}$. A similar picture can be given for all other Gauss law-type relations given above. Finally, as recoverable information is defined for any bipartition (as long as it is less than the code distance), one can ask about other patterns in the entanglement. The most obvious example is sharp-edge contributions in any $d$ dimensional cluster model.  Another example is the $d=3$ toric code, where we conjecture that a more general formula for recoverable information is
\begin{align}
\mu_{TC_3} = 1+ 2g,
\end{align}
where $g$ is the genus of the boundary of $A$ (we are assume that the bipartition is connected). For such general bipartitions, loop NLSS operators can be included in a minimal $G_{\text{NLSS}}$ as they now satisfy statements \ref{s2} and \ref{s3}, so long as they wrap non-contractible loops. For an  example using the X-cube model, we consider more general but still zero-genus entanglement surfaces. Once again with the benefit of independent NLSS counting, one can see the only property of the surface which recoverable information depends on is the bounding rectangular parallelepiped. If the edge lengths of this parallelepiped are $R_1, R_2$ and $R_3$, then we conjecture the general formula for recoverable information is
\begin{align}
\mu_{\text{XC}}= 2( R_1 + R_2 + R_3 +3),
\end{align}
where we once again have all the same dependency among the ribbon operators as before. Though we have assumed a set of principal axes (a dependency we naturally expect) and a certain kind of scale dependence, recoverable information demonstrates that there is a deformation invariance in the entanglement structure, that is in a sense topological. Presumably, Haah's code has a similar deformation invariance though we leave a more rigorous treatment of this to future work.

In summary, recoverable information is sensitive to finer but often relevant features in the entanglement structure that can be lost to multi-cut constructions. So even though its use is limited, when possible, it can give us a more complete picture of the structure of entangled and topological states of matter.

\section{Concluding Remarks}\label{sec:conclude}

We have introduced a new concept - recoverable information - which is at present defined only for stabilizer codes, but which provides for a natural (albeit Hamiltonian dependent) characterization and interpretation of topological information, complementary to topological entanglement entropy. We have introduced three distinct ways of calculating recoverable information - via counting cut stabilizers, counting topological constraints, and counting non-local surface stabilizers. We have also illustrated these methods of calculating recoverable information for {the cluster model in arbitrary dimension, the $d=2$ and $3$ toric codes} and two archetypal `fracton' models - the X-cube model and Haah's code. {We argue that from recoverable information we may deduce the existence of $Z_2$ Gauss' law type constraints, which in turn suggests an interpretation in terms of emergent $Z_2$ conservation laws for point-like quasiparticles. }

While our present discussion of recoverable information is restricted to stabilizer codes, it is worth noting that small perturbations about stabilizer codes can be treated using the method of Schrieffer-Wolff transformations (as discussed in \cite{companion1}). This allows us to find `dressed stabilizer' operators, which are eigenoperators of the ground state, and may provide a (Hamiltonian dependent) route to defining recoverable information outside of the stabilizer limit.  This perspective appears similar in spirit to the Hamiltonian-dependent definition of ``dressed'' Wilson line operators by Hastings and Wen~\cite{Hastings}. Finally, note that for any given state, one can always find a (not necessarily local) Hamiltonian for which this is the ground state. In this respect, it may be possible to extract recoverable information directly from a wavefunction (perhaps by minimizing over inferred Hamiltonians), thereby relaxing the Hamiltonian dependence of the definition. We leave such extensions of the concept of recoverable information, as well as further applications of the formalism introduced herein, to future work.

\section*{Acknowledgements} We acknowledge extremely useful discussions and a previous collaboration (Ref.~\onlinecite{companion1}) with M. Hermele, and  thank B.~Ware for valuable discussions. We are also very grateful to Jeongwan Haah for valuable comments on the manuscript. H.M. is supported by the U.S. Department of Energy, Office of Science, Basic Energy Sciences (BES) under Award number DE-SC0014415. SAP and RMN acknowledge support from the the Foundational Questions Institute (fqxi.org; grant no. FQXi-RFP-1617) through their fund at the Silicon Valley Community Foundation. This work was also supported in part by the NSF under Grant No. DMR-1455366 (SAP) at the University of California, Irvine. 
\appendix
\section{Relating non-local surface stabilizers to recoverable information} \label{appendixproofs}

In this appendix, we prove the claims made in the main text that connect non-local surface stabilizer groups to recoverable information. For the following, it is useful to introduce some notational conventions in order to work conveniently with subsets of the stabilizer set. 
We denote the {\it symmetric difference} of  any two subsets $F, H \subseteq \mc S$, by $F \uplus H = F \cup H - F\cap H$. This is isomorphic to summation in $W[\mc S]$: we have
\begin{align}\label{eq:symdif}
\prod_{s\in F\uplus H} s &= \prod_{s \in F\cup H - F\cap H} s \nonumber\\&=  \prod_{s \in F - F\cap H} s \prod_{s'\in H - F\cap H} s' \prod_{s''\in F\cap H} (s'')^2 \nonumber\\&= 
\prod_{s\in F} s \prod_{s'\in H} s',
\end{align}
 since any elements in both $F\cap H$ will square to the identity in the RHS. (On a more technical level, any power set equipped with a symmetric product forms an Abelian group; all we have done is to show that this group is isomorphic to $W[\mc S]$.) This allows us to use both set-theoretic and  algebraic properties without constantly toggling between $W[\mc S]$ and $\mb P[\mc S]$. 

We will also require a few standard properties of the symmetric difference. First, we note that it is both commutative and associative --- these were already implicit in our statement that it lends an Abelian group structure to the power set. Another frequently-used relationship is that  $F \uplus H = F \cup H$ when the two $F,H$  are disjoint. Intersection on both right and left distributes over the symmetric difference, i.e. $J \cap(F \uplus  H) = (F \uplus H)\cap J = (J\cap F) \uplus (J \cap H)$; however,  set difference is only left distributive, i.e $ (F \uplus H)-J= (F-J)  \uplus  (H-J)$. Also, we will frequently switch back and forth between a stabilizer group element and a set of stabilizers that generates it. Then, for any $F\subseteq \mc S$,  we say $F \sim g \in G$ iff $\prod_{s\in F}s = g$. We use the same notation if two sets generate the same group member, i.e $F \sim H$ iff $\prod_{s \in F}s = \prod_{s \in H} s$. Clearly, this implies that if $F \sim H$ then $F \uplus H \in \mc C$. We also continue to employ  the convention that if a subset $F \subseteq \mc S$ includes a subscript of $A$, $B$ or $\partial$, then this represents the unique partitioning of that set into members only supported in $A$ ($F_A= \{ s\in F: \supp(s) \subseteq A\}$), members only supported in $B$ ($F_B = \{ s\in F: \supp(s) \subseteq B\}$), and members with support in both $A$ and $B$ ($F_\partial =\{s \in F: \supp(s) \cap A \neq \emptyset \text{ and } \supp(s) \cap B \neq \emptyset \}$). 

First we prove Eq. \eqref{eq:NLSSeq},  restated here for convenience:
\begin{align}\label{eq:simeq} 
G_{\text{NLSS}} \simeq G_A/ \spn(\mc B_A) \times G_B / \spn(\mc B_B).
\end{align}
\begin{proof}
We prove \eqref{eq:simeq} by constructing a linear (in the sense of $V[G]$) bijective function, $ \psi: G_{\text{NLSS}} \to G_A/ \spn(\mc B_A) \times G_B/ \spn(\mc B_B)$. 

Consider any $g_\partial \in G_{\text{NLSS}}$. We know that there exists unique elements $g_\partial \sim F_\partial \subseteq B_\partial$, $g_A \in G_A$, and $g_B \in G_B$ such that $g_\partial= g_A g_B$. Furthermore, there exists a pair of sets $H, J \subseteq \mc B$ such that $g_A \sim H= H_A \cup H_B \cup H_\partial$ and $g_B \sim J= J_A \cup J_B \cup J_\partial$. Define, $g_A^{n \mc B} \sim H_B \cup H_\partial$, and $g_B^{n \mc B} \sim J_A \cup J_\partial$. Clearly, $g_A^{n \mc B} \in G_A / \spn(\mc B_A)$ and $g_B^{n \mc B} \in G_B/ \spn(\mc B_B)$. As these sets uniquely represent these operators in the given basis, we define the map $g_\partial \mapsto \psi(g_\partial) = g_A^{n \mc B} g_B^{n \mc B}$, which makes $\psi$ injective. Clearly, this is also linear. 

Now consider any element of $ g_A \in G_A/ \spn(\mc B_A)$ and $g_B \in G_B/ \spn(\mc B_B)$. Each of these has a unique representation in our given basis, $g_A \sim  H= H_A \cup H_B \cup H_\partial\subseteq \mc B$ and $g_B \sim J= J_A \cup J_B \cup J_\partial \subseteq \mc B$, which by definition must be such that $H_A = J_B = \emptyset$. Using the stabilizer-generating equivalence we may write
$ H_\partial \sim g_A\prod_{s\in H_B}s$ and $J_\partial \sim g_B\prod_{s\in J_A}s$.
Rearranging, we have  
$H_\partial  \uplus  J_\partial \sim   (g_A \prod_{s \in J_A}s) (g_B \prod_{s\in H_B}s) \in G_A \times G_B$, which must also be in $G_\partial$ and therefore in $G_{\text{NLSS}}$. So we have that for all $g_A g_B \in G_A/ \spn(\mc B_A) \times G_B/ \spn(\mc B_B)$,  there exists a $g_\partial \in G_{\text{NLSS}}$ such that $\psi(g_\partial) = g_A g_B$. Therefore, $\psi$ is surjective and is thus a linear bijection. 
\end{proof}

It is a corollary of the above argument that
\begin{align}
G_{\text{NLSS}} \simeq & \left(G_A \cap (\spn(\mc B_B) \times G_\partial) \right) \nonumber \\
& \times  G_B \cap (\spn(\mc B_A) \times G_\partial), \label{eq:NLSSfactor}
\end{align}
 since $\left(G_A \cap (\spn(\mc B_B) \times G_\partial) \right) = G_A/ \spn(\mc B_A)$ and $\left(G_B \cap (\spn(\mc B_A) \times G_\partial) \right) = G_B/ \spn(\mc B_B)$.

Our ultimate goal is to prove the equivalence of \ref{s1} -\ref{s3}. First, however, we prove a pair of lemmas that are both necessary to this goal, as well as being illuminating in their own right. Before proceeding, we recall the standard result that if $G_1, G_2 \subseteq G$ are both subgroups of $G$, then $G_1 \cap G_2$ is also a subgroup of $G$; we will use this in what follows without explicit comment.

\begin{lemma}\label{relfactor} Given two subgroups $G_1$ and $G_2$ of a stabilizer group $G$ which we studied in this context, there exists a factorization for either group in the form $G_1 = (G_1 \cap G_2)\times [G_1 / (G_1\cap G_2)]$ or vice-versa, where $G_1/(G_1\cap G_2)$ is the {\it  quotient group} of $G_1$ with respect to $G_1\cap G_2$.\end{lemma}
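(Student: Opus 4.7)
The plan is to leverage the $\mathbb{F}_2$-vector-space structure that the paper already established for stabilizer groups: since every element of $G$ squares to the identity and $G$ is abelian, the subgroups $G_1$ and $G_2$ correspond to subspaces $V[G_1]$ and $V[G_2]$ of $V[G]$, and any subspace of a finite-dimensional vector space admits a direct complement. This reduces the group-theoretic factorization claim to a linear-algebraic splitting.

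First I would observe that $G_1 \cap G_2$ is itself a subgroup of $G_1$, corresponding to the subspace $V[G_1\cap G_2] = V[G_1]\cap V[G_2] \subseteq V[G_1]$. Since $V[G_1]$ is a finite-dimensional $\mathbb{F}_2$-vector space, I can choose a basis $\mc B_\cap$ for $V[G_1\cap G_2]$ and extend it to a basis $\mc B_\cap \sqcup \mc B_H$ for $V[G_1]$. Letting $H$ denote the subgroup of $G_1$ whose image is $\spn(\mc B_H)$, I would then verify that $V[G_1] = V[G_1\cap G_2] \oplus V[H]$ as vector spaces, which translates directly into the group-theoretic statement $G_1 = (G_1\cap G_2)\times H$, where $\times$ is used in the paper's sense of the internal direct product of trivially-intersecting abelian subgroups.

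Next I would identify $H$ with the quotient $G_1/(G_1\cap G_2)$. This is the standard first-isomorphism-theorem argument applied to the projection $\pi: G_1 \to G_1/(G_1\cap G_2)$: its restriction to $H$ is injective, because $H \cap (G_1\cap G_2) = \{I_\Hi\}$ by construction of the basis, and surjective, because every element of $G_1$ is a product of an element of $G_1\cap G_2$ with an element of $H$ (by the direct-sum decomposition at the level of $V[G_1]$). Hence $H \simeq G_1/(G_1\cap G_2)$, yielding $G_1 = (G_1\cap G_2)\times [G_1/(G_1\cap G_2)]$. The ``or vice-versa'' statement follows by interchanging the roles of $G_1$ and $G_2$, since the argument uses only that $G_1\cap G_2$ is a subgroup of the group being factored.

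The main obstacle is essentially bookkeeping rather than substance: one must be careful that the paper's $\times$ notation is interpreted as internal direct product of subgroups with trivial intersection (so that the factorization holds literally as an equality of subgroups of $G$, not merely up to isomorphism). Once the basis extension is fixed, the disjointness $H \cap (G_1 \cap G_2) = \{I_\Hi\}$ and the spanning property $H \cdot (G_1 \cap G_2) = G_1$ both follow immediately from the direct-sum decomposition of $V[G_1]$, so no further calculation is required.
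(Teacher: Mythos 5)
Your proof is correct, and it is worth noting that it is actually more explicit than the paper's own argument. The paper's proof simply observes that $G_1\cap G_2$ is normal in the abelian group $G_1$, asserts that ``the factorization is an immediate consequence,'' and then resolves the coset-representative ambiguity by declaring that one picks a single representative per coset (including $I$) so that the factor is a genuine subgroup. That last step is the nontrivial content: an arbitrary choice of coset representatives need not be closed under multiplication, and for general abelian groups the extension $G_1\cap G_2 \hookrightarrow G_1 \twoheadrightarrow G_1/(G_1\cap G_2)$ need not split at all (e.g. $\mathbb{Z}_4$ over its $\mathbb{Z}_2$ subgroup). What rescues the paper's claim is precisely the structure you invoke: every subgroup of a stabilizer group is an elementary abelian $2$-group, hence an $\mathbb{F}_2$-vector space, so the subspace $V[G_1\cap G_2]$ admits a direct complement inside $V[G_1]$, and the complement is by construction a subgroup mapping isomorphically onto the quotient. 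Your basis-extension argument therefore supplies the justification that the paper leaves implicit, and your identification of $H$ with $G_1/(G_1\cap G_2)$ via the first isomorphism theorem is exactly the coherent choice of coset representatives the paper gestures at. The two proofs land in the same place; yours makes visible the one hypothesis (the $\mathbb{F}_2$-linear structure) on which the ``immediate consequence'' actually depends.
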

\begin{proof}
This follows  from the fact that $G_1\cap G_2$ is a subgroup of both $G_1$ and $G_2$, and is clearly normal as both $G_{1,2}$ are abelian. Therefore, we can construct the quotient group $G_1 / (G_1\cap G_2)$ and the factorization is an immediate consequence. Of course, there is some ambiguity in this factorization: any $g \in G_1 / (G_1\cap G_2) $ is equivalent to any other member $g'$ of its coset, $g' \in g(G_1\cap G_2) \subseteq G_1$.
\end{proof} 

To remove ambiguity, we assume the factor $G_1 / (G_1\cap G_2)$ contains one and only one member of each coset, and in particular, $I$ from the identity coset $I(G_1 \cap G_2)$, so that $G_1 / (G_1\cap G_2)$ is a proper subgroup.

\begin{lemma}
(Basis Connection Lemma) For any basis $\mc B$, there exists a unique constraint basis $\mf C_{\mc B}$ (i.e., a basis for the constraint subspace $\mc C$), with the following properties:
\begin{enumerate}[label= (L2.\arabic*)]
 \item Every stabilizer not in the basis, $s \in \mc S -\mc B$, belongs to  a unique element $C_s$ of the constraint basis, $s\in C_s \in \mf C_{\mc B}$, and is in turn the unique non-basis stabilizer contained in $C_s$, i.e. $(C_s -\{s\}) \cap (\mc S - \mc B) = \emptyset$, \label{l1}

 \item Any constraint $C \in \mc C$ is uniquely represented in the basis $\mf C_{\mc B}$ by combining via symmetric difference the constraints $C_s \in \mf C_{\mc B}$  that correspond under  \ref{l1} to non-basis elements of $C$, i.e. $C= \biguplus_{ s \in C- \mc B} C_s$, and \label{l2}

\item Given any other basis, $\mc B '$, there exists a bijection $\psi: \mc B-\mc B^\prime \to \mc B^\prime- \mc B$, such that for all $s \in \mc B - \mc B^\prime$, we have $s, \psi(s)  \in C'_s \in \mf C_{\mc B'}$. \label{l3} 
\end{enumerate}
\end{lemma}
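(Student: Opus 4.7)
The plan is to first construct the candidate constraint basis $\mf C_{\mc B}$ explicitly, and then verify properties (L2.1)--(L2.3) in turn, with (L2.1) and (L2.2) following directly from the construction and (L2.3) requiring the additional input of Hall's marriage theorem. For each non-basis stabilizer $s \in \mc S - \mc B$, since $\mc B$ maps bijectively under $\phi$ onto a basis of $V[G]$, there is a unique subset $F_s \subseteq \mc B$ with $\prod_{t \in F_s} t = s$; since $s \notin \mc B$, the set $C_s := \{s\} \cup F_s$ is well-defined and satisfies $\prod_{t \in C_s} t = I_{\Hi}$, i.e., $C_s \in \mc C$. Define $\mf C_{\mc B} := \{C_s : s \in \mc S - \mc B\}$. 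Property \ref{l1} is immediate: $s$ lies in $C_s$ and in no other $C_{s'}$ (because $C_{s'} - \{s'\} \subseteq \mc B$), while $s$ is the unique non-basis element of $C_s$ by construction.

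For \ref{l2}, given any constraint $C \in \mc C$, consider the symmetric difference $C^* := C \uplus \biguplus_{s \in C - \mc B} C_s$. Because $\mc C$ is a subspace of $W[\mc S]$ under $\uplus$, $C^* \in \mc C$. A direct count using \ref{l1} shows $C^* \cap (\mc S - \mc B) = \emptyset$, so $C^* \subseteq \mc B$. But any constraint that is a subset of $\mc B$ must be empty by the linear independence of $\phi(\mc B)$ in $V[G]$; hence $C = \biguplus_{s \in C - \mc B} C_s$, proving \ref{l2}. This representation simultaneously proves that $\mf C_{\mc B}$ spans $\mc C$, while linear independence follows trivially from \ref{l1} (each $C_s$ contains a distinct non-basis marker), and the size $|\mc S| - |\mc B| = \dim \mc C$ confirms this is indeed a basis. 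Uniqueness of $\mf C_{\mc B}$ comes for free: if another basis $\mf C'$ satisfied \ref{l1}, the associated $C'_s$ would differ from $C_s$ by a constraint supported entirely on $\mc B$, forcing $C'_s = C_s$.

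For \ref{l3}, fix another basis $\mc B'$. For each $s \in \mc B - \mc B'$, the constraint $C'_s \in \mf C_{\mc B'}$ takes the form $C'_s = \{s\} \cup F'_s$, where $F'_s \subseteq \mc B'$ encodes the $\mc B'$-expansion of $s$. Define the ``candidate image'' set $E_s := F'_s \cap (\mc B' - \mc B)$. The set $E_s$ is nonempty: otherwise $s$ would be a product of elements of $\mc B \cap \mc B' \subseteq \mc B - \{s\}$, contradicting linear independence of $\phi(\mc B)$. The plan is then to build $\psi$ as a system of distinct representatives for the family $\{E_s\}_{s \in \mc B - \mc B'}$ via Hall's marriage theorem. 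To verify Hall's condition, take any $S \subseteq \mc B - \mc B'$ and let $T := \bigcup_{s \in S} E_s$. For each $s \in S$, $\phi(s)$ lies in $\spn(\phi((\mc B \cap \mc B') \cup T))$. Passing to the quotient $V[G]/\spn(\phi(\mc B \cap \mc B'))$, the images of $S$ remain linearly independent (since $\mc B$ is a basis and $S$ is disjoint from $\mc B \cap \mc B'$), while the images of $T \subseteq \mc B' - \mc B$ span a subspace of dimension at most $|T|$ (as $\phi(\mc B')$ is linearly independent). Therefore $|S| \le |T|$, Hall's condition holds, and any resulting bijection $\psi$ satisfies $\psi(s) \in E_s \subseteq F'_s \subseteq C'_s$ together with $s \in C'_s$, as required.

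The main obstacle, and the only nontrivial step, is verifying Hall's condition for \ref{l3}; everything else is bookkeeping about symmetric differences and linear independence. The delicate point there is to pass to the correct quotient space so that both the linear independence of the $\phi(s)$'s and the upper bound in terms of $|T|$ survive the quotient, which requires carefully exploiting the fact that $\mc B'$ (and hence its subset $\mc B' - \mc B$) is itself linearly independent in $V[G]$.
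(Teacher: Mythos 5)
Your construction of $\mf C_{\mc B}$ and your proofs of (L2.1) and (L2.2) follow essentially the same route as the paper: build $C_s = \{s\}\cup F_s$ from the unique basis expansion of each non-basis stabilizer, and kill the residual constraint $C \uplus \biguplus_{s\in C-\mc B}C_s$ by observing that it lies entirely inside the independent set $\mc B$ and hence is empty. (Your independence argument for $\mf C_{\mc B}$ --- each $C_s$ carries a distinct non-basis ``marker'' --- is slightly cleaner than the paper's proof by contradiction, but it is the same idea, and your uniqueness remark is a correct addition.) The genuine divergence is in (L2.3). The paper shows that the sets $F'(s)-\mc B$ for $s\in\mc B-\mc B'$ are pairwise distinct, nonempty, and cover $\mc B'-\mc B$, and then asserts that a system of distinct representatives follows immediately. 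That inference is not valid in general: the four distinct nonempty sets $\{1\},\{2\},\{1,2\},\{1,2,3,4\}$ cover a four-element set yet admit no transversal, so distinctness plus covering plus equal cardinality does not by itself yield the bijection. You instead invoke Hall's marriage theorem and verify Hall's condition by a linear-algebra argument: for $S\subseteq\mc B-\mc B'$ with $T=\bigcup_{s\in S}E_s$, you pass to the quotient $V[G]/\spn(\phi(\mc B\cap\mc B'))$, where the images of $\phi(S)$ remain independent (being part of the basis $\mc B$ disjoint from $\mc B\cap\mc B'$) while lying in a subspace of dimension at most $|T|$, forcing $|S|\le|T|$. This is a correct way to extract the transversal and it closes a real gap in the paper's own argument; the only point left tacit is that the resulting injection is automatically a bijection because $|\mc B-\mc B'|=|\mc B'-\mc B|$, which follows from $|\mc B|=|\mc B'|=d_G$. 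The price of your route is the extra machinery of Hall's theorem and the quotient-space bookkeeping; the payoff is a proof of (L2.3) that actually goes through.
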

\noindent\ref{l3} gives this lemma its name. Stated simply,  for every element in one basis but not the other, we have a constraint which can be used to exchange that element with one from the other basis, thereby connecting the two bases.  
See Fig.~\ref{fig:bclemma} for a pictorial depiction of this result.

\begin{figure}[t]
\includegraphics[width=.5\columnwidth]{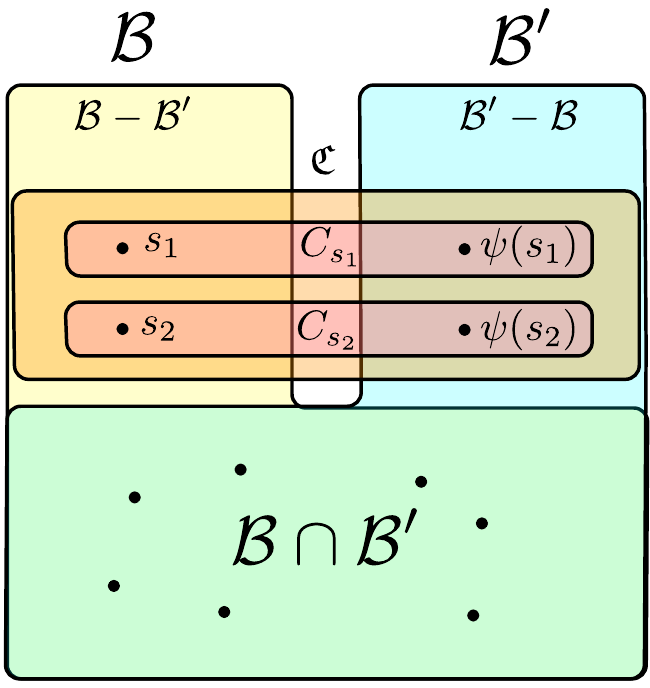}
\caption{\label{fig:bclemma}Pictorial depiction of the basis connection lemma, specifically \ref{l3}.}
\end{figure}
\begin{proof}
 Consider any element $s \in \mc S - \mc B$. As $s\notin \mc B$, and $\mc B$ is a basis, there must be a unique subset $F (s) \subseteq  \mc B$ such that $s \sim F(s)$.  
It then follows that $F(s)\cup \{s\} = C_s \in \mc C$, i.e. is a constraint. $C_s$ is uniquely specified by $s$ and clearly $(C_s -\{s\}) \cap (\mc S - \mc B) = \emptyset$. Therefore, if we define $\mf C_{\mc B}= \{C_s: s \in \mc S - \mc B\}$, this demonstrated \ref{l1}.

We now prove the set $\mf C_{\mc B}$ is a basis for $\mc C$, first by showing it is independent.  We do so by contradiction: assume that there exists some nonempty set $H \subseteq \mc S- \mc B$ such that 
 $\emptyset = \biguplus_{s\in H} C_s = \biguplus_{s \in H} (F (s) \cup \{s\}) = \biguplus_{s\in H} (F (s) \uplus \{s\})$,  where 
 in the second step we used the fact that $F(s) \uplus \{s\} =  F (s) \cup \{s\}$ since $F(s), \{s\}$ are disjoint. We then have, using the commutativity and associativity properties of the symmetric difference, that 
 \begin{align}
\emptyset &= \left[\biguplus_{s\in H} F(s)\right] \uplus \left[\biguplus_{s \in H}\{s\}\right]\nonumber\\ \implies& \biguplus_{s \in H} F (s) = \biguplus_{s \in H} \{s\} = H.\end{align}
This is a contradiction, since $H$ is nonempty by assumption and therefore must contain at least one element of $\mc S - \mc B$, whereas 
$F(s)$ contains no elements of $\mc S  -\mc B$ by \ref{l1}.
As well as being independent, the set $\mc C_{\mc B}$ also contains the exact number of constraints needed to span the constraint set ($|\mf C_{\mc B}| = |\mc S -\mc B|= |\mc S| - \dim(G) = \dim(\mc C)$ as discussed in Section \ref{sec:const}). Thus $\mf C_{\mc B}$ is a constraint basis for $\mc C$.

To prove \ref{l2}, Let $C \in \mc C$ be any constraint. As the constraint subspace is closed under symmetric difference, the following must be a constraint:
\begin{align}
\mc C &\ni \left\{\biguplus_{s \in C - \mc B} C_s\right\} \uplus C \nonumber\\&=\left\{\biguplus_{s \in C - \mc B} (F(s) \uplus \{s\})\right\} + C  \nonumber\\
&=\left\{\biguplus_{s \in C - \mc B} F(s)\right\} \uplus (C- \mc B)\uplus  C \nonumber \\
&= \left\{\biguplus_{s \in C - \mc B} F(s)\right\} \uplus (C\cap \mc B),
\label{eq:constraintC}
\end{align}
where we have again used the equivalence of the union and symmetric difference of disjoint sets as well as the disjoint partitioning of $C = (C\cap \mc B) \cup (C -\mc B) = (C\cap \mc B) \uplus (C -\mc B)$, which implies $(C- \mc B) \uplus C = C\cap \mc B$. Both terms in \eqref{eq:constraintC} are subsets of $\mc B$, again as $F(s) \subseteq \mc B$ by \ref{l1}, which means \eqref{eq:constraintC} is a constraint among members of $\mc B$. As $\mc B$ is a basis, there can be no nonempty constraints among its members, so it follows that  \eqref{eq:constraintC} is an empty constraint. Therefore  $\biguplus_{ s \in C- \mc B} C_s= C$, which is the unique representation of $C$ in the basis $\mf C_{\mc B}$.

Finally, we prove \ref{l3} by demonstrating the bijection $\psi$ exists. We first show that $\bigcup_{s \in \mc B- \mc B^\prime}( F^\prime(s)- \mc B) = \mc B^\prime - \mc B$, and then demonstrate that all the sets in the union are distinct and nonempty (the prime is included on $F'(s)$ to indicate that $F'(s) \subseteq \mc B'$, but is similarly defined by $s \sim F'(s)$). This immediately implies that we can construct a bijection that maps  $s \in \mc B-\mc B^\prime$ to some $\psi(s) \in F'(s) -{\mc B} \subseteq \mc B^\prime- \mc B$. Evidently,  $s, \psi(s) \in C'_s \in \mf C_{\mc B'}$.

As $F^\prime (s) \subseteq \mc B^\prime$, we have trivially that $\bigcup_{s \in \mc B- \mc B^\prime}( F^\prime(s)- \mc B) \subseteq \mc B^\prime - \mc B$. To show inclusion in the other direction, consider any $s^\prime\in \mc B^\prime - \mc B$. By \ref{l1}, we conclude that $s^\prime$ uniquely specifies a constraint $C_{s'} =F(s^\prime) \cup \{ s^\prime\}$ where $s' \sim F(s^\prime) \subseteq \mc B$. We also observe that  $F(s') - \mc B' \neq \emptyset$. (For, if this were not the case, then $F(s') \subseteq \mc B'$, but then $C_{s'} =F(s') \cup \{s'\}\subseteq\mc B'$ would be a nonempty constraint among members of  $\mc B'$, contradicting the fact that $\mc B'$ is a basis.) Then by $\ref{l2}$, the constraint $C_{s'}$ is represented in the $\mf C_{\mc B'}$ basis by $ C_{s'} = \sum_{s \in C_{s'}- \mc B'} C'_s = \sum_{s \in F(s')- \mc B'} C'_s$ (note we are using the fact that $s' \in \mc B'$ which implies $ C_{s'}- \mc B' = F(s')- \mc B'$). As $s' \in C_{s'}$, this implies that $s' \in C'_s - \mc B = F'(s) - \mc B$ for some $s \in  F(s')- \mc B'\subseteq \mc B - \mc B'$. Thus $\mc B^\prime- \mc B \subseteq \bigcup_{s \in \mc B- \mc B^\prime}( F^\prime(s)- \mc B) $, and we have proved inclusion in both directions, so that $\bigcup_{s \in \mc B- \mc B^\prime}( F^\prime(s)- \mc B) = \mc B^\prime - \mc B$.

 To show that no two sets of the form $F^\prime(s) - \mc B$ are equal, suppose instead that there exist $s, s^* \in \mc B$ such that $F'(s)- \mc B = F'(s^*) - \mc B$.  Using the disjoint partitioning of $F'(s) =(F'(s)\cap \mc B) \uplus (F'(s) - B)$, and similarly for $F'(s^*)$, we find that $\{s\} \uplus \{s^*\} \sim  (F'(s)\cap \mc B) \uplus  (F'(s^*)\cap \mc B)$. From this, we have  $\{s\} \uplus \{s^*\} \uplus (F'(s) \cap \mc B) \uplus (F'(s^*) \cap \mc B) \in \mc C$. As this constraint contains only members of a basis $\mc B$, it must be empty, and therefore from the uniqueness of $F'(s)$ and $F'(s^*)$, we must have $s=s^*$.
 Now $F'(s) - \mc B$ is nonempty for any $s$ by an argument exactly like that used above to show  $F(s') - \mc B' \neq \emptyset$. 
 
 From these results, we see that if we write $\bigcup_{s \in \mc B- \mc B^\prime}( F^\prime(s)- \mc B) = \mc B^\prime - \mc B$, then there are exactly as many distinct nonempty sets in the union on the LHS as there are elements in $\mc B' - \mc B$, and every member of $\mc B' - \mc B$ is contained in at least one of the sets in the union. Thus, we can bijectively map $s$ onto   $\psi(s) \in F'(s) - \mc B \subseteq \mc B' - \mc B$, in which case $s, \psi(s) \in F'(s) \cup\{s\} = C'_s\in \mf C_{\mc B'}$.
\end{proof}

One consequence of the basis connection lemma is that in minimizing over bases to find $\dim(G_{\text{NLSS}})=\mu$, there are no `local minima'. That is, every basis is directly connected to a minimal basis via some independent constraint set,  and so the minimizing procedure described in the main text will eventually reach this minimal basis, which is necessary  in order for it to reproduce the recoverable information $\mu$.

With these lemmas, we are ready to prove the equivalence of \ref{s1}-\ref{s3}, reproduced here for convenience:

\begin{enumerate}[label= (S\arabic*)]

\item $\dim( G_{\text{NLSS}}) = \mu$, \label{s1}

\item For any element $g_\partial \in G_{\text{NLSS}}$ with its associated $g_A \in G_A$ and $g_B \in G_B$ for which $g_\partial = g_A g_B$, if there exist sets of {stabilizers}, {$F_A \subseteq \mc S_A$ and $F_B \subseteq \mc S_B$  
 whose product is $g_A, g_B$ respectively}, then $F_A \cup F_B \cup F_\partial \in \mc T$, where $F_\partial \subseteq \mc B _\partial$ is the basis set which generates $g_\partial$ and \label{s2}

\item For all $g_\partial  \neq I \in G_{\text{NLSS}}$ such that $F_\partial \subseteq \mc B _\partial$ is the basis set which generates $g_\partial$, if there exists a {non-topological} constraint $C \in \mc C - \mc T$ such that $F_\partial \subseteq C$, then $C_\partial - F_\partial \neq \emptyset$. \label{s3}

\end{enumerate}

\begin{proof}
Without loss of generality, let $\mc S$ be such that $ \mc T = \{\emptyset\}$, as we can always change boundary conditions so that this is the case without changing $\mu$. Let $G_{\text{NLSS}}$ represent the non-local surface stabilizer group generated by the basis $\mc B$.

\noindent{\bf \ref{s1} $\implies$ \ref{s2} (by contraposition).} For the sake of argument, suppose there exists a $g_A g_B =g_\partial \in G_{\text{NLSS}}$ with $g_\partial \sim F_\partial \subseteq \mc B_\partial$, $g_A \in G_A$ and $g_B \in G_B$ such that there exist sets $F_A\subseteq \mc S_A$ and $F_B \subseteq \mc S_B$ with $g_A\sim F_A$ and $g_B \sim F_B$.  Then, since $g_\partial=g_A g_B $ implies that $F_\partial \sim F_A \uplus F_B$, we have that $F_A \uplus F_B \uplus F_\partial = C \in \mc C$. Now, suppose this constraint is not topological, i.e. $C \not\in \mc T$; since $\emptyset \in \mc T$, it follows that $C\neq \emptyset$. From \ref{l2}, we have that the expansion of $C$ in the $\mf C_{\mc B}$ basis is given by $C= \biguplus_{s \in C- \mc B} C_s$, where $C_s \in \mf C_{\mc B}$ is the unique constraint associated with $s \in C- \mc B$. Let $s_\partial$ be any member of  $F_\partial \subseteq C$. According to the expansion of $C$, $s_\partial \in C_s$ for some $s \in C- \mc B$, which is clearly not a member of the basis. Also by \ref{l1}, $C_s -\{s\} \subseteq \mc B$. With this, consider the set $\mc B^\prime = (\mc B -\{s_\partial\}) \cup \{s\}$. As any member of $\mc B$ is a product of members from $\mc B^\prime$ (namely $s_\partial \sim C_s -\{s_\partial\} \subseteq \mc B'$), this is a basis for which $F_\partial \nsubseteq \mc B^\prime$ and we have that $\mc B^\prime$ generates $ G_{\text{NLSS}}^\prime \subsetneq G_{\text{NLSS}}$. Therefore, $\mc B$ is not minimal and by contraposition,  \ref{s1} implies \ref{s2}.

\noindent{\bf \ref{s2} $\implies$ \ref{s3}.} Suppose $G_{\text{NLSS}}$ satisfies \ref{s2}. Now suppose there exists $ g_\partial \neq I \in G_{\text{NLSS}}$ such that $g_\partial \sim F_\partial \subseteq C \in \mc C$, where $F_\partial \subseteq \mc B_\partial$.  It follows then that we can also write $g_\partial \sim  C - F_\partial$. Using the standard  decomposition of set $C$, we may rewrite this as $g_\partial \sim C_A \cup C_B \cup (C_\partial -F_\partial)$. Now, suppose, contrary to  \ref{s3}, that  $C_\partial -F_\partial = \emptyset$. Then, we have $g_\partial \sim C_A \cup C_B$, from which it follows that $g_\partial =g_Ag_B$ with $G_A \ni g_A \sim C_A$ and $ G_B \ni g_B \sim C_B$. Now,  using \ref{s2} we  have that $C\in \mc T$, but since we also have $C\neq \emptyset$, this contradicts  $\mc T = \{\emptyset\}$ (as we have assumed at the outset). So, it must be that $C_\partial - F_\partial \neq \emptyset$, which proves \ref{s3}.

\noindent{\bf \ref{s3} $\implies$ \ref{s1}.} Finally, suppose $G_{\text{NLSS}}$ satisfies \ref{s3}, and  let $\mc B^\prime$ be any other basis which generates the non-local surface stabilizer group $G_{\text{NLSS}}^\prime$. We now construct an injective map $\chi:G_{\text{NLSS}} \rightarrow G_{\text{NLSS}}'$; this immediately requires that %can be mapped onto $G_{\text{NLSS}}'$ injectively, which
  $\dim(G_{\text{NLSS}}) \leq \dim(G_{\text{NLSS}}')$ and hence we can conclude that the basis $\mc B$ is minimal.

First, we observe that $G_{\text{NLSS}}$ and $G_{\text{NLSS}}'$ are both subgroups of $G$. Using Lemma~\ref{relfactor}, and defining $G_{\cap} =G_{\text{NLSS}} \cap  G_{\text{NLSS}}^\prime$ we may factor $G_{\text{NLSS}} = G_{\cap} \times [G_{\text{NLSS}} /G_{\cap}]$. We now show that, given any $g_\partial \neq I \in G_{\text{NLSS}} /G_{\cap}$ with $g_\partial \sim F_\partial \subseteq\mc B_\partial$,    $F_\partial - \mc B^\prime$ is uniquely specified by $g_\partial$. To see this, let  $g_{\partial 1} \sim F_{\partial 1}\subseteq \mc B_\partial$ be any member of $G_{\text{NLSS}}/G_\cap$ such that $F_{\partial 1} - \mc B^\prime = F_\partial  - \mc B^\prime$, and consider the disjoint partition $F_\partial =( F_\partial \cap \mc B') \uplus (F_\partial - \mc B')$, and similarly for $F_{\partial 1}$. Then $g_\partial g_{\partial 1} \sim F_\partial \cap \mc B^\prime \uplus F_{\partial 1} \cap \mc B^\prime$. Thus, $g_\partial g_{\partial 1}$ is generated by some subset of  $\mc B_\partial '$, and is therefore contained in $G_{\text{NLSS}}'$; but, since $g_\partial, g_{\partial1}\in \mc B_\partial$, they must both also be contained in $G_{\text{NLSS}}$. From this, we have that $g_\partial g_{\partial1} \in G_\cap$, but as  the factor $G_{\text{NLSS}} /G_{\cap}$ is closed and only trivially intersects $G_\cap$, it follows that $g_\partial =g_{\partial 1}$, i.e. $F_\partial - \mc B^\prime$ is uniquely specified by 
 $g_\partial$ (for the given factorization of $G_{\text{NLSS}}$ or, equivalently, unique to the coset of $g_\partial$).

 By \ref{l1}, for every $s \in F_\partial - \mc B^\prime$ there exists a unique $C'_s \in \mf C_{\mc B'}$ such that $s \in C'_s$ and $(C'_s -\{s\}) \cap (F_\partial - \mc B^\prime) = \emptyset$. As each member of $F_\partial - \mc B^\prime$ is in one and only one of these constraints, we have $F_\partial - \mc B^\prime \subseteq \biguplus_{s \in F_\partial - \mc B^\prime} C'_s = C \in \mc C$. 
 Furthermore, we have 
 \begin{align}
 C- (F_\partial - \mc B') &=\biguplus_{s \in F_\partial - \mc B^\prime} C'_s - (F_\partial - \mc B') \nonumber\\&= \biguplus _{s \in F_\partial - \mc B^\prime} (C'_s -\{s\}) \subseteq \mc B', 
 \end{align} where we have used the left-distributivity of set difference over symmetric difference and the fact that $F_\partial -\mc B' \subseteq \mc B'$ with $(F_\partial -\mc B') \cap C'_s = \{s\}$. Note that the  constraint $C$ is unique to $F_\partial - \mc B^\prime$ and thus unique to $g_\partial \in G_{\text{NLSS}}/G_{\cap}$ 
. So, we have that $F_\partial - \mc B^\prime \sim C- (F_\partial - \mc B^\prime) = C_A \uplus C_B \uplus \left(C_\partial - (F_\partial - \mc B^\prime)\right)$ (as $C_{A(B)}\cap (F_\partial - \mc B^\prime) = \emptyset$ by definition). Using this result and the disjoint partitioning of $F_\partial$, we have
\begin{align}\label{eq:appinter1}
g_\partial \sim F_\partial =& (F_\partial \cap \mc B^\prime) \uplus (F_\partial - \mc B^\prime)\nonumber \\
\sim& 
C_A \uplus C_B \uplus  \left[\left\{F_\partial \cap \mc B^\prime\right\}\uplus \left\{C_\partial - (F_\partial - \mc B^\prime)\right\}\right].
\end{align}
But we also have that $g_\partial = g_A g_B$ for some $g_A \in G_A$ and $g_B \in G_B$. This implies that  %
\begin{align}
  \left\{F_\partial \cap \mc B^\prime\right\}&\uplus \left\{C_\partial - (F_\partial - \mc B^\prime)\right\}
 \nonumber \\
\sim& \left(g_A \prod_{s \in C_A}s \right)\left(g_B \prod_{s \in C_B}s \right) \in G_{\text{NLSS}}^\prime,
\end{align}
where in the second line, we use $(F_\partial \cap \mc B^\prime) \uplus \left(C_\partial - (F_\partial - \mc B^\prime)\right) \subseteq \mc B'$ to argue that we have an element in $G_{\text{NLSS}}^\prime$. 
From this, we can define a map $\chi :G_{\text{NLSS}} \to G_{\text{NLSS}}^\prime$ such that  
 \be
 g_\partial \mapsto \chi(g_\partial) \sim (F_\partial \cap \mc B^\prime) \uplus \left(C_\partial - (F_\partial - \mc B^\prime)\right).\label{eq:chidef}
 \ee 
 Note that for any $g_\partial \in G_\cap$ with $g_\partial \sim F_\partial$, $F_\partial - \mc B^\prime$ is empty by definition; in this case it generates an empty constraint via the preceding construction. For such an element, $\chi(g_\partial)\sim F_\partial \cap \mc B' = F_\partial \sim g_\partial$.

 We now show that $\chi$ is linear.  To do this, consider any $g_{\partial1}, g_{\partial2} \in G_{\text{NLSS}}$, with $g_{\partial1} \sim F_{\partial1} \subseteq \mc B$ and $g_{\partial2} \sim F_{\partial2} \subseteq \mc B$. This implies $g_{\partial1} g_{\partial2} \sim F_{\partial 1} \uplus F_{\partial 2} \subseteq \mc B$, which is its unique basis representation. By our previous arguments, we have the constraint $C_{12} \supseteq(F_{\partial 1} \uplus F_{\partial 2}) - \mc B^\prime$ uniquely associated with $  (F_{\partial 1} \uplus F_{\partial 2}) - \mc B'$. Moreover
\begin{align}
  C_{12} &= \biguplus_{s\in (F_{\partial 1} + F_{\partial 2}) - \mc B^\prime} C_s\nonumber \\
 &= \left\{\biguplus_{s \in F_{\partial1} - \mc B'} C_s\right\} \uplus \left\{ \biguplus_{s \in F_{\partial2} - \mc B'} C_s\right\} \nonumber \\
 &= C_1 \uplus C_2,
\end{align}
 where $C_{1(2)}$ is the unique constraint for $F_{\partial 1(2)}- \mc B'$ as discussed above. Here we have used the fact that for $s \in (F_{\partial 1} \cap F_{\partial 2}) - \mc B^\prime$, $C_s$ cancels out in the expansion of $C_1 \uplus C_2$, allowing us to combine the two sums into the one, yielding $C_{12}$. Using this relation, we have $F_{\partial1}- \mc B' \sim C_1 -(F_{\partial1}- \mc B')$,  $F_{\partial2}- \mc B' \sim C_2 -(F_{\partial2}- \mc B')$ and  $(F_{\partial1}\uplus F_{\partial2}) - \mc B' \sim (C_1\uplus C_2) - \left((F_{\partial1}\uplus F_{\partial2}) - \mc B'\right)$. This in turn implies
\begin{align}\label{eq:intapp}
 (F_{\partial1}\uplus F_{\partial2}) - \mc B' =&(F_{\partial1}- \mc B')\uplus( F_{\partial2}- \mc B') \nonumber \\
\sim& \left(C_1 -(F_{\partial1}- \mc B')\right)\uplus  \left(C_2 -(F_{\partial2}- \mc B')\right) \nonumber \\ 
 \sim & (C_1\uplus C_2) - \left((F_{\partial1}\uplus F_{\partial2}) - \mc B'\right).
\end{align}
Since both sets on either side of the last $\sim$ relation are subsets of $\mc B'$, and any constraints among members of $\mc B'$ must be empty, we have that the last  $\sim$ of Eq. \eqref{eq:intapp} can be replaced by set equality.  
To further simplify this equality, we recognize $C_{1(2)} = (C_{1(2)})_A \uplus (C_{1(2)})_B\uplus (C_{1(2)})_\partial$ and observe all sets being subtracted from these constraints in Eq. \eqref{eq:intapp} contain only cut stabilizers. In that case, $(C_{1(2)})_A \uplus (C_{1(2)})_B$ is unaffected and can be canceled  from both sides of Eq. \eqref{eq:intapp} so that
\begin{align}\label{eq:intapp2}
 &\left((C_1)_\partial -(F_{\partial1}- \mc B')\right)\uplus  \left((C_2)_\partial -(F_{\partial2}- \mc B')\right)\nonumber \\
&= ((C_1)_\partial \uplus(C_2)_\partial) - \left((F_{\partial1}\uplus F_{\partial2}) - \mc B'\right).
\end{align}
 We now compute the action of $\chi$ on the product $g_{\partial1} g_{\partial2}$ using the definition (\ref{eq:chidef}):
\begin{align}
\chi(g_{\partial1} g_{\partial2}) \sim& [(F_{\partial1} \uplus F_{\partial2}) \cap \mc B']\uplus \left[\left\{(C_1)_\partial +(C_2)_\partial\right\} \right. \nonumber\\
&- \left. \left\{(F_{\partial1} \uplus F_{\partial2}) - \mc B'\right\}\right]\nonumber\\
 = & (F_{\partial1} \cap \mc B') \uplus\left[ (C_1)_\partial -(F_{\partial1}- \mc B')\right] \nonumber\\
& \uplus (F_{\partial2} \cap \mc B') \uplus\left[ (C_2)_\partial -(F_{\partial2}- \mc B')\right] \nonumber \\
\sim& \chi(g_{\partial1}) \chi(g_{\partial2}),
\end{align}
where we have used Eq. \eqref{eq:intapp2} and distributivity of intersection over symmetric difference. This completes the argument that $\chi$ is linear.

Up to this point, we have yet to use the fact that $\mc B$ and $G_{\text{NLSS}}$ satisfy \ref{s3}. However, consider some $g_\partial \in G_{\text{NLSS}}$ with $g_\partial \sim F_\partial \subseteq \mc B$ and suppose $ \chi(g_\partial) = I$. Using the unique constraint $C\supseteq F_\partial - \mc B'$ associated with $F_\partial -\mc B'$, we can apply the definition of $\chi$, which implies $F_\partial \cap \mc B^\prime \uplus \left(C_\partial - (F_\partial - \mc B^\prime)\right) \in \mc C$. But all members of this constraint are members of $\mc B'$, which implies this constraint is empty. Thus $F_\partial \cap \mc B^\prime = \left(C_\partial - (F_\partial - \mc B^\prime)\right)$. Using the disjoint partitioning of $F_\partial$, this implies $F_\partial = (F_\partial \cap \mc B' )\cup (F_\partial - \mc B') =\left(C_\partial - (F_\partial - \mc B^\prime)\right)\cup (F_\partial - \mc B') \subseteq C$.  With this, we consider 
\begin{align}
C_\partial - F_\partial =& C_\partial - \left((F_\partial \cap \mc B' )\cup (F_\partial - \mc B')\right)\nonumber \\
 =& \left( C_\partial - (F_\partial \cap \mc B')\right) \cap  \left(C_\partial - (F_\partial - \mc B')\right) \nonumber \\
=& \left( C_\partial - (F_\partial \cap \mc B')\right) \cap (F_\partial \cap \mc B^\prime) = \emptyset,
\end{align}
using $F_\partial \cap \mc B^\prime = \left(C_\partial - (F_\partial - \mc B^\prime)\right)$ and DeMorgan's rule for distributing set difference over union in the second line. But by our hypothesis, such a set can only be empty for $g_\partial =I$. Thus only $\chi(I) =I$, which as $\chi$ is linear, implies $\chi$ is injective (i.e has a trivial kernel). 

Injectivity implies that the imagine of $G_{\text{NLSS}}$ under $\chi$ has the same dimension as $G_{\text{NLSS}}$ itself. As $\chi[G_{\text{NLSS}}] \subseteq G_{\text{NLSS}}^\prime$, we therefore have that $\dim(G_{\text{NLSS}}) = \dim(\chi[G_{\text{NLSS}}]) \leq \dim(G_{\text{NLSS}}^\prime)$, which implies $\mc B$ is minimal.
\end{proof}

As this proof is quite complicated, it is worth making a general observation. Essentially the proof constructs a `canonical' mapping between any two non-local surface stabilizer groups $G_{\text{NLSS}}$ and $G_{\text{NLSS}}^\prime$. The map $\chi$ is well-defined regardless of whether or not \ref{s3} is satisfied. \ref{s3} only guarantees the resulting map is injective. The importance of the basis connection lemma in allowing us to construct a well-defined, linear $\chi$, should be evident from the proof. 

 \bibliography{library}
 \end{document}